\newcounter{MYtempeqncnt}
\newtheorem{theorem}{Theorem}
\newtheorem{proposition}{Proposition}
\newtheorem{remark}{Remark}
\DeclareMathOperator*{\argmax}{argmax}
\DeclareMathOperator*{\argmin}{argmin}
\begin{document}
\title{On Physical Layer Security over Fox's $H$-Function Wiretap Fading Channels}
\author{Long~Kong,~Georges~Kaddoum,~\IEEEmembership{Member,~IEEE}, and Hatim~Chergui,~\IEEEmembership{Member,~IEEE}
\thanks{L. Kong and G. Kaddoum are with the LaCIME lab, Department
of Electrical Engineering, École de technologie supérieure (ÉTS), Université du Québec, Montréal (Québec), Montreal,
Canada, H3C 1K3, e-mails: (long.kong.1@ens.estmtl.ca, georges.kaddoum@etsmtl.ca.}
\thanks{H. Chergui is with CTTC, Barcelona, Spain.
e-mail: chergui@ieee.org.}
\thanks{Manuscript received xxxxxx.}}


\maketitle

\begin{abstract}
Most of the well-known fading distributions, if not all of them, could be encompassed by the Fox's $H$-function fading. Consequently, we investigate the physical layer security (PLS) over Fox's $H$-function fading wiretap channels, in the presence of non-colluding and colluding eavesdroppers. In particular, for the non-colluding scenario, closed-form expressions are derived for the secrecy outage probability (SOP), the probability of non-zero secrecy capacity (PNZ), and the average secrecy capacity (ASC). These expressions are given in terms of either univariate or bivariate Fox's $H$-function. In order to show the effectiveness of our derivations, three metrics are respectively listed over the following frequently used fading channels, including Rayleigh, Weibull, Nakagami-$m$, $\alpha-\mu$, Fisher-Snedecor (F-S) $\mathcal{F}$, and extended generalized-$\mathcal{K}$ (EGK). Our tractable results are not only straightforward and general, but also feasible and applicable, especially the SOP, which is usually limited to the lower bound in the literature due to the difficulty of deriving closed-from analytical expressions. For the colluding scenario, a super eavesdropper equipped with maximal ratio combining (MRC) or selection-combining (SC) schemes is characterized. The lower bound of SOP and exact PNZ are thereafter derived with closed-form expressions in terms of the multivariate Fox's $H$-function. In order to validate the accuracy of our analytical results, Monte-Carlo simulations are subsequently conducted for the aforementioned fading channels. One can observe that for the former non-colluding scenario, we have perfect agreement between the exact analytical and simulation results, and highly accurate approximations between the exact and asymptotic analytical results. On the contrary, the SOP and PNZ of colluding eavesdropper is greatly degraded with the increase of the number of eavesdroppers. Also, the so-called super eavesdropper with MRC is much powerful to wiretap the main channel than the one with SC. 
\end{abstract}
\begin{IEEEkeywords}
Physical layer security, Fox's $H$-function wiretap fading channels, Mellin transform, secrecy outage probability, probability of non-zero secrecy capacity, average secrecy capacity.
\end{IEEEkeywords}
\IEEEpeerreviewmaketitle

\section{Introduction} 
\IEEEPARstart{D}{ifferent} wireless systems are usually characterized with various statistical models. For example, the gamma-gamma distribution was introduced to model the free space optical (FSO) communication link \cite{8358703,7968415}, and Fisher-Snedecor (F-S) $\mathcal{F}$ to model the device-to-device communication \cite{7886273,8359199}. As such, many endeavors have been drawn to investigate the mathematical characteristics of secure transmission for different communication scenarios. 

Dating back to the fundamental works of physical layer security (PLS) from the information theoretical perspective, Shannon and Wyner are undoubtedly the pioneers in this field \cite{Shan49,6772207}. They established the mathematical background of perfect secrecy and wiretap channel models. Later on, Wyner's classic wiretap model was investigated over additive white Gaussian noise channel (AWGN) and Rayleigh fading channels \cite{Leung78,4529264}. Over the past decades, plenty of research efforts have been pursued on the investigation of PLS over various fading channels, such as Rayleigh \cite{4529264}, Rician \cite{6338984,IETletter2}, Nakagami-$m$, Weibull \cite{6831147}, Lognormal \cite{7058442}, generalized-$\mathcal{K}$ \cite{7313027,7342908,7654915,8279403,8672771}, and $\alpha-\mu$ (or, equivalently, generalized gamma) \cite{7094262,7374839,7856980,ICCWORKSHOPlong2018,OutageLongLett,8477185}, $\alpha-\eta-\kappa-\mu$ \cite{8421203}, etc. Secrecy outage probability (SOP), the probability of non-zero secrecy capacity (PNZ), and the average secrecy capacity (ASC) are the three typical and frequently studied secrecy metrics.

As more new communication topologies appear, e.g., device-to-device (D2D) communications, FSO communications, intervehicle communication, millimeterwave  (mmWave) communications, wireless body area networks (WBAN), and cognitive radios, the existing models become obsolete. As such, more advanced and better suited fading models were subsequently proposed and analyzed, such as $\alpha-\mu$ \cite{4067122}, $\kappa-\mu/\eta-\mu$ \cite{4231253}, F-S $\mathcal{F}$ \cite{7886273,8359199}, the extended generalized-$\mathcal{K}$ (EGK) \cite{6226905}, cascaded $\alpha-\mu$ fading \cite{8354927}, among many other fading channels.  

With the emergence of various fading models, a unified and generic fading model is required to subsume most, if not all, of these fading distributions. Fox's $H$-function distribution, reported in \cite{7089281,7470056,7547496}, is one possible model to accommodate various fading models with high flexibility. It was first introduced in \cite{bodenschatz1992finding} and \cite{cook1981h} as a pure mathematical finding, and can be generalized to the Gamma, exponential, Chi-square, Weibull, Rayleigh, Half-Normal distribution, etc. Other examples, including generalized-$\mathcal{K}$, $\alpha-\mu$, F-S $\mathcal{F}$, and EGK, were recently explored by Alhennawi \textit{et al.} \cite{7089281} and Rahama \textit{et al.} \cite{8338118}. These findings were achieved by transforming these probability density distributions (PDFs) of received signal-to-noise ratios (SNRs) in terms of Fox's $H$-function.

The feasibility and applicability of Fox's $H$-function distribution as a general fading model for wireless communication is not new. In \cite{6226905}, a variation of Fox's $H$-function fading model was proposed as a general model for most well-known distributions. Jeong \textit{et al.} found that Fox's $H$-function distribution offers a better fading model of vehicle-to-vehicle (V2V) communication than other ordinary fading distributions \cite{6550886}. More recently, Alhennawi \textit{et al.} in \cite{7089281} derived the symbol error rate (SER) and channel capacity of single- and multiple-branch diversity receivers when communicating over Fox's $H$-function fading channels. As a consequence, the advantages of Fox's $H$-function fading are threefold: 
\begin{itemize}
\item The genericity of its form for most distribution, e.g., Rayleigh, Nakagami-\textit{m}, Weibull, $\alpha-\mu$, etc;
\item The simplicity and the generality of it to derive the key performance metrics of wireless communications systems, e.g., outage probability, SER, and channel capacity \cite{7089281}. 
\item The possibility of using its distribution to study the PLS analysis over $\alpha-\mu$, F-S $\mathcal{F}$ fading channels \cite{7856980,OutageLongLett,LongFisherF,8354927}.
\end{itemize}
To the best of the authors' knowledge, apart from the investigation of PLS over the aforementioned fading channels \cite{4529264,6338984,IETletter2,6831147,7058442,7313027,7342908,7654915,8279403,8672771,7094262,7374839,7856980,ICCWORKSHOPlong2018,OutageLongLett}, including generalized-$\mathcal{K}$, $\alpha-\mu$, $\kappa-\mu$ \cite{7467556,7996664,8013132}, F-S $\mathcal{F}$ \cite{LongFisherF}, no works has ever been found to analyze the PLS over the general Fox's $H$-function fading channels. To this end, this paper is subject to the investigation of PLS over Fox's $H$-function fading channels, with consideration of the non-colluding and colluding eavesdropping scenarios.
\vspace{-0.5cm}
\subsection{Our Work and Contributions}
The contributions of this paper are multifold, which are listed as follows:
\begin{itemize}
\item[1)] Novel exact and closed-form expressions are initially derived for the secrecy metrics, including the SOP, PNZ, and ASC. Our formulations, in terms of univariate or bivariate Fox's $H$-function, are given in \textit{simple} and \textit{tractable} mathematical expressions.
\item[2)] The difficulty of deriving closed-form expressions for the SOP explicitly lies in tractable integrals. Consequently, many works can be found on the development of lower bound of the SOP ($\mathcal{P}_{out}=Pr(C_s \ge R_s)$). Since the lower bound of SOP is actually the complementary of the probability of non-zero secrecy capacity, i.e., $\mathcal{P}_{nz} = Pr(C_s >0)$, it is much easier to obtain the lower bound of the SOP and PNZ, which can be found in \cite{4067122}. Strictly speaking, our work fills this gap of lacking exact closed-form SOP expressions. 
\item[3)] The obtained general and unified secrecy metrics' expressions are found identical with the existing works when being compared with Monte-Carlo simulation results. Moreover, the obtained secrecy expressions can be straightforward applied to other transformable but not listed herein wiretap fading channels.
\item[4)] The asymptotic behaviors of these secrecy metrics are also obtained for the sake of providing simple but highly accurate approximations of secrecy metrics at high average signal-to-noise (SNR) regime. 
\item[5)] Considering the colluding eavesdropping scenario with maximal ratio combining (MRC) and selection combining (SC) schemes, the lower bound of the SOP and exact PNZ are characterized in terms of multivariate Fox's $H$-function.  
\end{itemize}
Resultantly, the obtained analytical results are especially beneficial since the analytical expressions themselves (i) provide a unified approach to analyze the PLS over a generalized fading model; (ii) serve as an efficient and convenient tool to validate and compare the special cases of Fox's $H$-function fading channels; and (iii) enable researchers and wireless communication engineers to quickly evaluate secrecy performance when encountering security risks.
\vspace{-0.3cm}
\subsection{Structure and Notations}
The rest of this paper is structured as follows: Section \ref{Sec_Preliminary} illustrates Fox's $H$-function fading and its Mellin transform. In Section \ref{Sec_systemmodel}, the system model and problem formulation are presented. In the presence of non-colluding and colluding scenarios, secrecy analysis are respectively conducted in Sections \ref{Ana_noncoll}, \ref{Ana_Asym}, and \ref{Label_Colluding}, together with several examples. Afterwards, in Section \ref{Sec_Num}, numerical results and discussions are presented. Finally, Section \ref{Sec_con} concludes the paper.

\textit{Mathematical Functions and Notations}: $j \triangleq \sqrt{-1}$, $\Gamma(.)$ is the complete Gamma function, $H_{p,q}^{m,n}[.]$ is the univariate Fox's $H$-function \cite[eq. (1.2)]{mathai2009h}, $H_{p,q;p_1,q_1;p_2,q_2}^{m,n;m_1,n_1;m_2,n_2}$ is the extended generalized bivariate Fox's $H$-function \cite[eq. (2.56)]{mathai2009h}. $H_{p,q;p_1,q_1;\cdots;p_L,q_L}^{m,n;m_1,n_1;\cdots;m_L,n_L}$ is the multivariate Fox's $H$-function \cite[eq. (2.56)]{mathai2009h}. $f(x)$ and $F(x)$ represent the probability density function (PDF) and cumulative distribution function (CDF) of $x$, respectively. $\mathcal{B}(x,y)$ is the Beta function \cite[eq. (8.380.1)]{gradshteyn2014table}. $\mathcal{M}[f(x),s]$ denotes the Mellin transform of $f(x)$. $\text{Res}[f(x),s]$ represents the residue of function $f(x)$ at pole $x=p$. $\Psi_0(\cdot)$ is the digamma function.

\section{Preliminary} \label{Sec_Preliminary}
\subsection{Fox's $H$-function Fading}
Consider a wireless communication link over a fading channel, where the instantaneous SNR at user $k$, $\gamma_k$, follows Fox's $H$-function PDF, given by \cite{bodenschatz1992finding}
\begin{equation} \label{PDF_foxH}
\begin{split}
f_{k}(\gamma_k)& = \kappa H_{p,q}^{m,n} \left[ {\lambda \gamma_k \left| {\begin{array}{*{20}c}
    {(a_i,A_i)_{i=1:p}}   \\
   {(b_l,B_l)_{l=1:q}}  \\
\end{array}} \right.} \right], \quad \gamma >0, \\
&\mathop =^{(a)} \frac{\kappa}{2\pi j} \int_{\mathcal{L}} \Theta_k (s) (\lambda \gamma_k)^{-s} ds,
\end{split}
\end{equation}
where $\lambda > 0$ and $\kappa$ are constants such that $\int_0^\infty f_k (\gamma_k) d \gamma_k = 1$. $(x_i, y_i)_l$ is a shorthand for $(x_1,y_1),\cdots,(x_l,y_l)$. Step $(a)$ is developed by expressing Fox's $H$-function in terms of its definition \cite[eq. (1.2)]{mathai2009h}. $A_i > 0$ for all $i=1,\cdots, p$, and $B_l > 0$ for all $l = 1, \cdots, q$. $ 0 \le m \le q$, $0 \le n \le p$, $\mathcal{L}$ is a suitable contour separating the poles of the gamma functions $\Gamma(b_l+B_l s)$ from the poles of the gamma functions $\Gamma(1 - a_i -A_i s)$,
\begin{equation}
\Theta_k (s) = \frac{\prod \limits_{l=1}^m \Gamma(b_l + B_l s) \prod \limits_{i=1}^n \Gamma(1 - a_i - A_is)}{\prod \limits_{l=m+1}^q\Gamma(1-b_l - B_l s) \prod \limits_{i=n+1}^p\Gamma(a_i + A_is)} .
\end{equation}

The cumulative distribution function (CDF) of the received SNR at user $k$, i.e., $\gamma_k$ is given by \cite[eqs. (3.9) and (3.7)]{bodenschatz1992finding}
\begin{subequations}
\begin{equation}
\begin{split}
F_k(\gamma_k)& = \frac{\kappa}{\lambda} H_{p+1,q+1}^{m,n+1} \left[ {  \lambda \gamma_k \left| {\begin{array}{*{20}c}
    {(1,1),(a_i+A_i,A_i)_p}   \\
   {(b_l+B_l,B_l)_q,(0,1)}  \\
\end{array}} \right.} \right] ,
\end{split}
\end{equation}
or
\begin{equation}
\begin{split}
F_k(\gamma_k)& = 1 - \frac{\kappa}{\lambda} H_{p+1,q+1}^{m+1,n} \left[ {  \lambda \gamma_k \left| {\begin{array}{*{20}c}
    {(a_i+A_i,A_i)_p,(1,1)}   \\
   {(0,1),(b_l+B_l,B_l)_q}  \\
\end{array}} \right.} \right] \\
& = 1 - \bar{F}_{k}(\gamma_k),
\end{split}
\end{equation}
\end{subequations}
where $\bar{F}_{k}(\gamma)$ is the complementary CDF (CCDF). For the notational convenience, $\Theta_{k}^f$ and $\Theta_{k}^F$ are used thereafter to denote the PDF and CDF of Fox's $H$-function, respectively. The Mellin transform of $f_k(\gamma)$ is defined and given as \cite[eq. (5)]{7089281} \cite[eq. (2.8)]{mathai2009h},
\begin{equation}
\mathcal{M}[ f_k(\gamma_k),s] = \int_0^\infty f_k(\gamma_k) \gamma^{s-1} d \gamma_k = \kappa \lambda^{-s} \Theta_k(s).
\end{equation}
\subsection{Special cases}
As mentioned before, Fox's $H$-function distribution provides enough flexibility to accommodate most fading distributions. As a result, the objective herein is to list some well-known examples, such as the $\alpha-\mu$\footnotemark[1]\footnotetext[1]{Since $\alpha-\mu$ distributions can be attributed to exponential, one-sided Gaussian, Rayleigh, Nakagami-$m$, Weibull and Gamma fading distributions by assigning specific values for $\alpha$ and $\mu$, respectively \cite{4067122}, secrecy analysis on these fading distributions is thus omitted herein.}, F-S $\mathcal{F}$, and EGK, as shown in Table. \ref{T1}, where $\bar{\gamma}_k$ is the average received SNR at user $k$.  
\begin{table}[!t]
\newcommand{\tabincell}[2]{\begin{tabular}{@{}#1@{}}#2\end{tabular}}
  \begin{center}
  \renewcommand{\arraystretch}{1.4}
    \caption{Exact expressions of $f_{k}(\gamma_k)$ for different special cases of Fox's $H$-function distribution} 
    \label{T1}
    \begin{tabular}{p{38pt} l}
      \toprule
       Ins. SNR & $f_k(\gamma_k)$  \\
      \midrule
  \tabincell{c}{\textbf{$\alpha-\mu$}\\ \cite[Tab. V]{6226905}} &  \tabincell{l}{$f_k(\gamma_k )= \kappa H_{0,1}^{1,0} \left[ {  \lambda \gamma_k \left| \hspace{-1ex}{\begin{array}{*{20}c}
    {-}   \\
   {(\mu-\frac{1}{\alpha},\frac{1}{\alpha})}  \\
\end{array}} \right.}\hspace{-1ex} \right],$\\ \specialrule{0em}{1pt}{1pt} where $\kappa = \frac{\beta}{\Gamma(\mu)\bar{\gamma}_k},\lambda = \frac{\beta}{\bar{\gamma}_k}, \beta = \frac{\Gamma\left(\mu + \frac{1}{\alpha}\right)}{\Gamma(\mu)}$.} \\ \specialrule{0em}{1pt}{1pt} \hline \specialrule{0em}{1pt}{1pt}
\tabincell{c}{\textbf{F-S}  $\mathcal{F}$ \\ \cite[eq. (5)]{7886273}} &  \tabincell{l}{$f_k(\gamma_k )=\kappa H_{1,1}^{1,1} \left[ {  \lambda\gamma_k \left| \hspace{-1ex} {\begin{array}{*{20}c}
    {(-m_{k,s},1)}   \\
   {(m_k-1,1)}  \\
\end{array}} \right.} \hspace{-1ex} \right],$\\  \specialrule{0em}{1pt}{1pt}
where $\kappa = \frac{\lambda }{\Gamma(m_k)\Gamma(m_{k,s})}, \lambda = \frac{m_k}{m_{k,s}\bar{\gamma}_k}$.} \\ \specialrule{0em}{1pt}{1pt} \hline \specialrule{0em}{1pt}{1pt}
      \tabincell{c}{\textbf{EGK} \\ \cite[eq. (18)]{8338118}} &  \tabincell{l}{$f_k(\gamma_k )=\kappa  H_{0,2}^{2,0} \left[ {  \lambda \gamma_k \left| \hspace{-1ex}{\begin{array}{*{20}c}
    {-}   \\
   {(m_l - \frac{1}{\xi_l},\frac{1}{\xi_l}),(m_{sl}-\frac{1}{\xi_{sl}},\frac{1}{\xi_{sl}})}  \\
\end{array}} \right.} \hspace{-1ex}\right],$ \\ \specialrule{0em}{1pt}{1pt}
where $\kappa = \frac{\beta_l \beta_{sl}}{\Gamma(m_l)\Gamma(m_{sl})\bar{\gamma}_{k}}, \lambda = \frac{\beta_l\beta_{sl}}{\bar{\gamma}_{k}}, \beta_{l} = \frac{\Gamma\left(m_l + \frac{1}{\xi_l}\right)}{\Gamma(m_l)}$,\\ \quad and $ \beta_{sl} = \frac{\Gamma\left(m_{sl} + \frac{1}{\xi_{sl}}\right)}{\Gamma(m_{sl})}$.  }\\
      \bottomrule
    \end{tabular}
  \end{center}
\end{table} 
\section{System Model and Problem formulation} \label{Sec_systemmodel}
\subsection{System Model}
The Alice-Bob-Eve classic wiretap model is used here to illustrate a legitimate transmission link (Alice $\rightarrow$ Bob) in the presence of a malicious eavesdropper. In such a wiretap channel model, the transmitter Alice (A) wishes to send secret messages to the intended receiver Bob (B) in the presence of an eavesdropper Eve (E); the link between A and B is called the main channel, whereas the one between A and E is named as the wiretap channel. It is assumed that (i) all users are equipped with a single antenna; (ii) both links are independent and subjected to Fox's $H$-function fading; (iii) a perfect channel state information (CSI) is available at all users.

As a result, the received SNRs at B and E are denoted as $\gamma_k, k \in \{B,E \}$, which follow Fox's $H$-function PDF, and are respectively given by
\begin{subequations}
\begin{equation} \label{PDF_Non}
f_{B} (\gamma_B)= \kappa_B H_{p_0,q_0}^{m_0,n_0} \left[ {  \lambda_B \gamma_B \left| {\begin{array}{*{20}c}
    {(a_i,A_i)_{i=1:p_0}}   \\
   {(b_l,B_l)_{l=1:q_0}}  \\
\end{array}} \right.} \right], \ \gamma_B >0,
\end{equation}
\begin{equation} \label{CDF_Non}
f_{E}(\gamma_E) = \kappa_E H_{p_1,q_1}^{m_1,n_1} \left[ {  \lambda_E \gamma_E \left| {\begin{array}{*{20}c}
    {(c_i,C_i)_{i=1:p_1}}   \\
   {(d_l,D_l)_{l=1:q_1}}  \\
\end{array}} \right.} \right], \ \gamma_E >0.
\end{equation}
\end{subequations}
\subsection{Problem Formulation}
According to \cite{4529264}, the secrecy capacity over fading wiretap channels is defined as the difference between the main channel capacity $C_M = \log_2(1+\gamma_B)$ and the wiretap channel capacity $C_W = \log_2(1+\gamma_E)$ as follows
\begin{equation} \label{Cs_de}
    C_s=
   \begin{cases}
    C_M  - C_W,  &\mbox{$\gamma_B>\gamma_E$}\\
   0, &\mbox{otherwise}.
   \end{cases}
\end{equation}
In other words, a positive secrecy capacity can be assured if and only if the received SNR at Bob has a superior quality than that at Eve's.
\subsubsection{Secrecy Outage Probability}
The outage probability of the secrecy capacity is defined as the probability that the secrecy capacity $C_s$ falls below the target secrecy rate $R_t$, i.e.,
\begin{equation}\label{Definition_Pout}
\mathcal{P}_{out}(R_s)=Pr(C_s<R_t).
\end{equation}
Technically speaking, SOP can be conceptually explained as two cases: (i) $C_s < R_t$ whilst positive secrecy capacity is surely guaranteed; (ii) secrecy outage definitely happens when $C_S$ is non-positive. To this end, (\ref{Definition_Pout}) can be rewritten as follows \cite{7313027,7881216},
\begin{equation} \label{Pout_def}
\begin{split}
\mathcal{P}_{out}(R_s)&  = \mathcal{P}r ( \gamma_B \le R_s \gamma_E + R_s -1 ) \\
&= \int_0^{\infty}  F_{B}(\gamma_0) f_{E}(\gamma_E) d\gamma_E,
\end{split}
\end{equation}
where $R_s = 2^{R_t}$, $\gamma_0 =  R_s \gamma_E + \mathcal{W}$, and $\mathcal{W} = R_s -1$. 

The SOP characterizes the probability of failure to achieve a reliable and secure transmission. In addition, it shows that PLS can be achieved by fading alone, even when Eve has a better average SNR than Bob.
\subsubsection{Probability of Non-Zero Secrecy Capacity}
The PNZ refers to the event that the positive secrecy capacity can be surely achieved, namely $Pr(C_s >0)$, thus respecting its definition, (\ref{Cs_de}) can be further rewritten as follows, 
\begin{equation} \label{def_Pspc}
\begin{split}
\mathcal{P}_{nz}& =Pr(\gamma_B > \gamma_E)= \int_0^{\infty} f_{B}(\gamma_B) F_{E}(\gamma_B) d\gamma_B .
\end{split}
\end{equation}
\subsubsection{Average Secrecy Capacity}
average secrecy capacity provides a mathematical indicator of the capacity limit for a given constraint of perfect secrecy.

By using some simple mathematical manipulations, the ASC can be further re-expressed as the sum of three terms, which are given by \cite{7342908}
\begin{equation} \label{ASC_DEFI}
\begin{split}
\bar{C}_s  = &\underbrace{\int_0^\infty \log_2(1+\gamma_B) f_B(\gamma_B) F_E(\gamma_B) d \gamma_B}_{I_1} \\
& + \underbrace{\int_0^\infty\hspace{-1ex}  \log_2(1+\gamma_E) f_E(\gamma_E) F_B(\gamma_E) d \gamma_E}_{I_2} \\
& - \underbrace{\int_0^\infty \hspace{-1ex} \log_2(1+\gamma_E) f_E(\gamma_E)  d \gamma_E }_{I_3}.
\end{split}
\end{equation}
For the brevity of the following derivations, let $g_k(\gamma_k) = \ln(1+\gamma_k) f_B(\gamma_k) $.
\section{Secrecy Metrics Characterization } \label{Ana_noncoll}
To begin the characterization of the secrecy performance over Fox's $H$-function fading channels, one useful and unified theorem is first provided. This theorem is essentially beneficial to the acquisition of the final closed-form expressions for the aforementioned three secrecy metrics.
\begin{theorem} \label{theorem1}
Consider a general fading channel where the received SNR's PDF is $f(\gamma)$ and another function $u(\gamma)$. Suppose their Mellin transforms are $\mathcal{M}[f(\gamma),s]$ and $\mathcal{M}[u(\gamma),s]$, respectively. If the Mellin transform of $u(\gamma)$ exists, then by using \textit{Parseval's formula} for Mellin transform \cite[eq. (8.3.23)]{debnath2014integral}, we have
\begin{equation}
\hspace{-1.5ex}\int _ 0^\infty f(\gamma) u(\gamma) d\gamma = \frac{1}{2\pi j} \int_{\mathcal{L}} \mathcal{M}[f(\gamma),s] \mathcal{M}[u(\gamma),1-s]ds,
\end{equation}
where ${\mathcal{L}}$ is the integration path from $\upsilon-j\infty$ to $\upsilon + j\infty$, and $\upsilon$ is a constant.
\end{theorem}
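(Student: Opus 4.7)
The plan is to reduce the theorem to the inverse Mellin transform identity plus an application of Fubini's theorem, which is the standard route to Parseval-type formulas in Mellin calculus.

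First I would represent $u(\gamma)$ by its inverse Mellin transform along the contour $\mathcal{L}$, namely
\begin{equation*}
u(\gamma) = \frac{1}{2\pi j}\int_{\mathcal{L}} \mathcal{M}[u(\gamma),s']\, \gamma^{-s'}\, ds',
\end{equation*}
which is valid under the standard hypothesis that $\mathcal{M}[u(\gamma),s']$ exists and is analytic in a vertical strip containing $\mathrm{Re}(s')=\upsilon$. I would substitute this expression into the target integral $\int_0^\infty f(\gamma)u(\gamma)\,d\gamma$, giving a double integral in $(\gamma,s')$.

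The second step is to swap the order of integration. This is the step that requires the most care: I would invoke Fubini's theorem after checking that the integrand is absolutely integrable on $(0,\infty)\times\mathcal{L}$, which holds provided $f\in L^{1}(0,\infty)$ and $\mathcal{M}[u(\gamma),s']$ decays sufficiently fast along the vertical line $\mathrm{Re}(s')=\upsilon$ (a hypothesis implicitly covered by saying the Mellin transform of $u$ exists on $\mathcal{L}$). After interchange, the inner integral becomes
\begin{equation*}
\int_0^\infty f(\gamma)\, \gamma^{-s'}\, d\gamma \;=\; \int_0^\infty f(\gamma)\, \gamma^{(1-s')-1}\, d\gamma \;=\; \mathcal{M}[f(\gamma),\,1-s'],
\end{equation*}
directly from the definition of the Mellin transform given just before the theorem.

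Finally, I would substitute this back and rename the dummy variable $s' \to s$, obtaining
\begin{equation*}
\int_0^\infty f(\gamma)\,u(\gamma)\, d\gamma \;=\; \frac{1}{2\pi j}\int_{\mathcal{L}} \mathcal{M}[u(\gamma),s]\,\mathcal{M}[f(\gamma),1-s]\, ds,
\end{equation*}
which is the claimed identity (the two factors commute under the contour integral, so the placement of $f$ versus $u$ inside the integrand is immaterial). The main obstacle, as flagged above, is the analytic justification of the Fubini step, i.e.\ verifying absolute convergence of the double integral; once that is in place, the rest is bookkeeping from the definition of Mellin pair.
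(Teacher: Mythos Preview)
Your proposal is correct and follows the standard textbook derivation of Parseval's formula for the Mellin transform: inverse-Mellin represent one factor, swap integrals via Fubini, and recognize the inner integral as $\mathcal{M}[f(\gamma),1-s]$.

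By way of comparison, the paper does not actually prove this theorem at all. The statement itself already embeds the justification (``by using Parseval's formula for Mellin transform \cite[eq.~(8.3.23)]{debnath2014integral}''), and the text immediately following simply says the theorem ``is recalled to make a basis for the following derivations.'' In other words, the paper treats this as a known identity imported from a reference, whereas you supply an explicit derivation. Your version is therefore strictly more informative; the only thing to keep in mind is that your careful discussion of the Fubini hypothesis goes beyond what the paper requires or verifies.
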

The aforementioned Theorem is recalled to make a basis for the following derivations. To this end, we have the following remark.
\begin{remark} 
The SOP, PNZ, and ASC over Fox's $H$-function fading wiretap channels are respectively given by
\begin{subequations}
\begin{equation} \label{Pout_theorem1}
\mathcal{P}_{out} = \frac{1}{2\pi j} \int_{\mathcal{L}_1} \mathcal{M}[F_B(\gamma_0),1-s]\mathcal{M}[f_E(\gamma_E),s]ds,
\end{equation}
\begin{equation} \label{Pnz_theorem1}
\mathcal{P}_{nz} =\frac{1}{2\pi j} \int_{\mathcal{L}_1}\mathcal{M}[F_E(\gamma_B),1-s]\mathcal{M}[f_B(\gamma_B),s]ds,
\end{equation}
\begin{equation} \label{Cs_theorem1}
\begin{split}
\bar{C}_s & = \frac{1}{2\pi j} \int_{\mathcal{L}_1} \mathcal{M}[g_B(\gamma_E),1-s]\mathcal{M}[F_E(\gamma_B),s]ds  \\
&  + \frac{1}{2\pi j} \int_{\mathcal{L}_1} \mathcal{M}[g_E (\gamma_E),1-s] \mathcal{M}[F_B(\gamma_E),s] ds \\
& -\frac{1}{2\pi j} \int_{\mathcal{L}_1} \mathcal{M}[f_E(\gamma_E),1-s]\mathcal{M}[\ln(1 + \gamma_E),s] ds
\end{split}
\end{equation}
\end{subequations}
\end{remark}
\begin{proof}
Recalling (\ref{Pout_def}), (\ref{def_Pspc}), and (\ref{ASC_DEFI}), and then using \textbf{Theorem} \ref{theorem1}, the proofs for (\ref{Pout_theorem1}), (\ref{Pnz_theorem1}), and (\ref{Cs_theorem1}) are directly accomplished. 
\end{proof}
\subsection{SOP Characterization}
\subsubsection{Exact SOP Characterization}
\begin{theorem} \label{theorem_pout}
The SOP over Fox's $H$-function fading wiretap channels is given by (\ref{pout_theorem}), shown at the top of this page. 
\begin{figure*}[!t]
\setcounter{MYtempeqncnt}{\value{equation}}
\setcounter{equation}{12}
\begin{equation} \label{pout_theorem}
\begin{split}
\hspace{-1ex}\mathcal{P}_{out} &=1-  \frac{\kappa_B \kappa_E \mathcal{W}}{\lambda_B R_s}   H_{1,0:q_1,p_1+1:q_0,p_0+1}^{0,1:n_1+1,m_1:n_0,m_0} \left[ {  \frac{R_s }{ \lambda_E \mathcal{W}},\frac{1}{\lambda_B\mathcal{W}} \left|\hspace{-1ex} {\begin{array}{*{20}c}
    {(2,1,1)}   \\
   {-}  \\
\end{array}} \hspace{-1ex}\right. \hspace{-0.5ex}
\left|\hspace{-1ex} {\begin{array}{*{20}c}
   {(1-d_l,D_l)_{l=1:q_1}}  \\
   {(1,1),(1-c_i,C_i)_{i=1:p_1}}   \\
\end{array}} \hspace{-1ex}\right. 
\left|\hspace{-1ex} {\begin{array}{*{20}c}
    {(1-b_l-B_l,B_l)_{l=1:q_0}}  \\
    {(1 - a_i-A_i,A_i)_{i=1:p_0},(0,1)}   \\
\end{array}} \right.} \hspace{-1ex} \right] ,
\end{split}
\end{equation}
\hrulefill
\end{figure*}
\end{theorem}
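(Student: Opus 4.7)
The plan is to avoid Theorem~\ref{theorem1} directly, because $F_B(\gamma_0)$ depends on the compound argument $\gamma_0 = R_s\gamma_E + \mathcal{W}$, which makes its Mellin transform with respect to $\gamma_E$ awkward. Instead I would begin from \eqref{Pout_def} and use the CCDF form $F_B = 1 - \bar F_B$ stated just below \eqref{CDF_Non}, so that
$$\mathcal{P}_{out} = 1 - \int_0^\infty \bar F_B(R_s\gamma_E + \mathcal{W})\, f_E(\gamma_E)\, d\gamma_E.$$
I would then unfold $\bar F_B$ through its Mellin-Barnes definition (step $(a)$ of \eqref{PDF_foxH}) and factor the argument as $R_s\gamma_E + \mathcal{W} = \mathcal{W}(1 + R_s\gamma_E/\mathcal{W})$. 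The term $(1 + R_s\gamma_E/\mathcal{W})^{-s_1}$ can then be expanded via the Mellin-Barnes identity
$$(1+y)^{-s_1} = \frac{1}{\Gamma(s_1)}\cdot \frac{1}{2\pi j}\int_{\mathcal{L}_2} \Gamma(s_2)\,\Gamma(s_1-s_2)\, y^{-s_2}\, ds_2,$$
which is nothing but the $H^{1,1}_{1,1}$ representation of a binomial. This key step decouples $\gamma_E$ into a clean power $\gamma_E^{-s_2}$.

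Having separated the variables, I would swap the order of integration on the strip of absolute convergence, so that the innermost $\gamma_E$-integral collapses to the Mellin transform $\mathcal{M}[f_E(\gamma_E),1-s_2] = \kappa_E \lambda_E^{s_2-1}\,\Theta_E(1-s_2)$. What remains is a double Mellin-Barnes integral whose integrand factors as $\Theta_B^{\bar F}(s_1)\,\Theta_E(1-s_2)\,\Gamma(s_2)\Gamma(s_1-s_2)/\Gamma(s_1)$, multiplied by $(\lambda_B\mathcal{W})^{-s_1}\,(R_s/(\lambda_E\mathcal{W}))^{-s_2}$ and the scalar prefactor $\kappa_B\kappa_E\mathcal{W}/(\lambda_B R_s)$, which already matches the constant in \eqref{pout_theorem}. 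A final substitution of the form $s_2 \mapsto 1-s_2$ rewrites $\Gamma(s_1-s_2)$ as $\Gamma(s_1+s_2-1)$, which is exactly the gamma factor generated by the single linking parameter triple $(2;1,1)$ in the top row of the Srivastava-Panda bivariate Fox $H$; the re-indexed $s_2$-factors transcribe into the dual entries $(1-d_l,D_l)$ and $(1,1),(1-c_i,C_i)$ of Eve's block, while the $s_1$-factors yield the Bob block $(1-b_l-B_l,B_l)$ and $(1-a_i-A_i,A_i),(0,1)$. Matching the resulting kernel with the definition \cite[eq.~(2.56)]{mathai2009h} gives \eqref{pout_theorem}.

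The main obstacle lies in the last step: the parameter bookkeeping that recasts the double Mellin-Barnes integrand into the canonical Srivastava-Panda form. The interchange of $m_i\leftrightarrow n_i$ and $p_i\leftrightarrow q_i+1$ in both Eve's and Bob's blocks appears because the substitution $s_2\mapsto 1-s_2$ maps numerator gammas to denominator gammas and vice versa, and because the coupling factor $\Gamma(s_1+s_2-1)$ has to be separated from the two single-variable kernels. A subordinate but non-trivial point is to justify the exchange of the $s_1$, $s_2$ contours with the $\gamma_E$-integral; this follows from Stirling-type exponential decay of $\Theta_B^{\bar F}$ and $\Theta_E$ along vertical strips together with the existence of $\mathcal{M}[f_E,\cdot]$ on a common strip containing the contours.
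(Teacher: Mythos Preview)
Your proposal is correct and follows essentially the same route as the paper's proof in Appendix~\ref{appendix1}: start from $1-\int_0^\infty \bar F_B(\gamma_0)f_E(\gamma_E)\,d\gamma_E$, open $\bar F_B$ as a Mellin--Barnes integral, factor $\gamma_0=\mathcal{W}(1+R_s\gamma_E/\mathcal{W})$, and reduce to a double contour integral identified with the bivariate Fox $H$. The only cosmetic difference is that the paper evaluates the inner integral $\int_0^\infty \gamma_E^{-s}(1+R_s\gamma_E/\mathcal{W})^{-\xi}\,d\gamma_E$ in closed form via the Beta identity \cite[eq.~(3.194.3)]{gradshteyn2014table}, producing $\Gamma(1-s)\Gamma(\xi+s-1)/\Gamma(\xi)$ directly, whereas you obtain the same gamma factors by inserting a second Mellin--Barnes representation for $(1+y)^{-s_1}$; the two devices are equivalent and lead to the same integrand before the final bookkeeping.
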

\begin{proof}
See Appendix \ref{appendix1}.
\end{proof}
\subsubsection{Lower Bound of SOP}
As $\bar{\gamma}_B$ and $\bar{\gamma}_E$ tend to $\infty$, we have 
\begin{equation}
\begin{split}
\mathcal{P}_{out} & = Pr\left(\log_2 \left(\frac{1 + \gamma_B}{1 + \gamma_E} \right) < R_t \right)\\
& \approx Pr \underbrace{\left(\log_2 \left(\frac{ \gamma_B}{ \gamma_E} \right) <R_t \right)}_{ \mathcal{P}_{out}^{L}} 
= \int_0^\infty F_{B}(R_s y) f_E(y) dy  .
\end{split}
\end{equation}
\begin{proposition}
As $\bar{\gamma}_B$ and $\bar{\gamma}_E$ tend to $\infty$, the lower bound of the SOP over Fox's $H$-function fading channels is given by (\ref{PoutAsy_theorem}), shown at the top of next page.
\begin{figure*}[!t]
\setcounter{MYtempeqncnt}{\value{equation}}
\setcounter{equation}{14}
\begin{equation} \label{PoutAsy_theorem}
\mathcal{P}_{out}^{L} = 1 - \frac{\kappa_B\kappa_E}{\lambda_B\lambda_E} H_{p_1+q_2+1,q_1+p_2+1}^{m_1+n_2+1,n_1+m_2 } \left[ { \frac{\lambda_B R_s}{\lambda_E} \left| {\begin{array}{*{20}c}
    {(a_i+A_i,A_i)_{i=1:n_1},(1-d_l-D_l,D_l)_{l=1:q_2},(a_i+A_i,A_i)_{i=n_1+:p_1},(1,1)}   \\
   {(0,1),(b_l+B_l,B_l)_{l=1:m_1},(1-c_i-C_i,C_i)_{i=1:p_2}, (b_l+B_l,B_l)_{l=m_1+1:q_1}}  \\
\end{array}} \right.} \right].
\end{equation} 
\hrulefill
\vspace{-0.45cm}
\end{figure*}
\end{proposition}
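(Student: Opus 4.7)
The plan is to massage the lower-bound integral $\mathcal{P}_{out}^L=\int_{0}^{\infty}F_B(R_sy)f_E(y)\,dy$ into a single contour (Mellin--Barnes) integral whose integrand is the product of the Fox $H$-kernels of the CDF of $\gamma_B$ and the PDF of $\gamma_E$, and then read off the resulting expression as one Fox $H$-function.

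First, I would split the CDF using $F_B(R_sy)=1-\bar{F}_B(R_sy)$ so that the constant term contributes $\int_{0}^{\infty}f_E(y)\,dy=1$. This yields $\mathcal{P}_{out}^{L}=1-J$, where $J=\int_{0}^{\infty}\bar{F}_B(R_sy)\,f_E(y)\,dy$ and $\bar{F}_B$ is given by (3b) as an $H_{p_0+1,q_0+1}^{m_0+1,n_0}[\lambda_B R_s y|\cdot]$. The quantity $J$ is thus an integral of a product of two Fox's $H$-functions, which is the natural setting for \textbf{Theorem}~\ref{theorem1}.

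Next, I would invoke Parseval's formula (exactly as in the \textbf{Remark}) with $f(\gamma)=f_E(\gamma)$ and $u(\gamma)=\bar{F}_B(R_s\gamma)$. The Mellin transform of $f_E$ is immediately $\kappa_E\lambda_E^{-s}\Theta_E(s)$. For $\bar{F}_B(R_sy)$, the change of variable $x=R_sy$ together with the Mellin transform of an $H$-function gives $\mathcal{M}[\bar{F}_B(R_sy),1-s]=\frac{\kappa_B}{\lambda_B}(\lambda_B R_s)^{s-1}\,\Theta_B^{\bar F}(1-s)$, where $\Theta_B^{\bar F}$ is the Mellin--Barnes integrand associated with (3b). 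Combining the two Mellin transforms and substituting $s\to 1-s$ in the contour integral packages $J$ into a single contour integral of the form $\frac{1}{2\pi j}\int_{\mathcal L}\Theta(s)\,z^{-s}\,ds$ with $z=\lambda_B R_s/\lambda_E$, whose prefactor is $\kappa_B\kappa_E/(\lambda_B\lambda_E)$.

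The third step is to recognize $\Theta(s)$ as the defining kernel of one Fox's $H$-function. I would sort its gamma factors by (i) numerator vs.\ denominator and (ii) sign of $s$, mapping each of the four classes to the four standard positions $\{(a_i,A_i)_{1:n},(a_i,A_i)_{n+1:p},(b_l,B_l)_{1:m},(b_l,B_l)_{m+1:q}\}$. Bob's CCDF contributes $\Gamma(s)$ and the shifted $(b_l+B_l,B_l)$/$(a_i+A_i,A_i)$ factors plus the $\Gamma(1+s)$ from the extra $(1,1)$-pair; after the $s\to 1-s$ substitution, Eve's $\Theta_E(1-s)$ supplies the $(1-d_l-D_l,D_l)$ and $(1-c_i-C_i,C_i)$ factors. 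Counting parameters yields indices $m_0+n_1+1$, $n_0+m_1$, $p_0+q_1+1$, $q_0+p_1+1$, which matches the form in (15) under the paper's relabelling.

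The main obstacle, therefore, is not an analytical one but the careful interlacing of the parameter lists: ensuring each shifted Bob parameter $(a_i+A_i,A_i)$ or $(b_l+B_l,B_l)$ lands in its correct slot relative to the Eve contributions $(1-d_l-D_l,D_l)$ and $(1-c_i-C_i,C_i)$, and that the auxiliary $(0,1)$ and $(1,1)$ pairs stemming from the CDF representation appear in the required positions. Absolute convergence along $\mathcal L$ (needed to justify the Fubini swap) follows from the admissibility conditions $A_i,B_l,C_i,D_l>0$ already assumed for both fading $H$-densities, so no further growth estimates are required.
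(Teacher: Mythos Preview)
Your proposal is correct and follows essentially the same route as the paper: split $F_B=1-\bar F_B$ to get the leading $1$, then reduce $\int_0^\infty \bar F_B(R_sy)f_E(y)\,dy$ to a single Fox's $H$-function via the Mellin machinery. The only difference is cosmetic: the paper dispatches this last step in one line by invoking the known Mellin transform of a product of two Fox's $H$-functions \cite[eq.~(2.25.1.1)]{prudnikov1990integrals}, whereas you unfold that identity through Parseval's formula (Theorem~\ref{theorem1}) and then repack the contour integral by hand.
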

\begin{proof}
By applying the Mellin transform of the product of two Fox's $H$-function \cite[eq. (2.25.1.1)]{prudnikov1990integrals}, the proof is accomplished.
\end{proof}
\subsection{PNZ Characterization}
\begin{theorem}
The PNZ over Fox's $H$-function wiretap fading channels is given by (\ref{Pnz_theorem}), shown at the top of next page.
\begin{figure*}[!t]
\setcounter{MYtempeqncnt}{\value{equation}}
\setcounter{equation}{15}
\begin{equation} \label{Pnz_theorem}
\begin{split}
\mathcal{P}_{nz}   = \frac{\kappa_B \kappa_E}{\lambda_B\lambda_E} H_{p_0+q_1+1,q_0+p_1+1}^{m_1+n_0,n_1+m_0+1} \left[ {  \frac{\lambda_E}{\lambda_B} \left| {\begin{array}{*{20}c}
    {(1,1),(c_i+C_i,C_i)_{i=1:p_1},(1-b_l-B_l,B_l)_{l=1:q_0},(c_i+C_i,C_i)_{i=n_1+1:p_1}}   \\
   {(d_l+D_l,D_l)_{l = 1:m_1},(1-a_i-A_i,A_i)_{i=1:p_0},(0,1),(d_l+D_l,D_l)_{l = m_1 +1:q_1} }  \\
\end{array}} \right.} \right],
\end{split}
\end{equation}
\hrulefill
\vspace{-0.4cm}
\end{figure*}
\end{theorem}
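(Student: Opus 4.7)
The plan is to invoke Theorem \ref{theorem1} in its specialized form (\ref{Pnz_theorem1}), or equivalently to work directly from the defining integral (\ref{def_Pspc})
\[
\mathcal{P}_{nz}=\int_{0}^{\infty} f_{B}(\gamma_B)\,F_{E}(\gamma_B)\,d\gamma_B,
\]
and reduce the resulting integral of a product of two Fox's \(H\)-functions to a single univariate Fox's \(H\)-function via the Mellin-convolution identity \cite[eq. (2.25.1.1)]{prudnikov1990integrals}, exactly as was done for the lower-bound SOP in the previous proposition.

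First, I would substitute the CDF representation (6a), namely
\[
F_{E}(\gamma_B)=\frac{\kappa_E}{\lambda_E}\,H_{p_1+1,q_1+1}^{m_1,n_1+1}\!\left[\lambda_E\gamma_B\,\Big|\,{\textstyle(1,1),(c_i+C_i,C_i)_{1:p_1}\atop (d_l+D_l,D_l)_{1:q_1},(0,1)}\right],
\]
together with the PDF form (\ref{PDF_Non}) of \(f_{B}(\gamma_B)\), into the above integral. The integrand then becomes the product of two \(H\)-functions in the same argument \(\gamma_B\) (up to the scaling constants \(\lambda_B\) and \(\lambda_E\)), multiplied by the prefactor \(\kappa_B\kappa_E/\lambda_E\).

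Second, I would apply Prudnikov's formula \cite[eq. (2.25.1.1)]{prudnikov1990integrals}, which evaluates \(\int_0^\infty H^{m_0,n_0}_{p_0,q_0}[\lambda_B x\,|\cdot]\,H^{m_1,n_1+1}_{p_1+1,q_1+1}[\lambda_E x\,|\cdot]\,dx\) in closed form as a single Fox's \(H\)-function of argument \(\lambda_E/\lambda_B\), whose order is obtained by concatenation of the parameter lists of the two input \(H\)-functions. Concretely, the \(f_B\) parameters enter in the \emph{reflected} form \((1-a_i-A_i,A_i)\) and \((1-b_l-B_l,B_l)\) in accordance with the identity, while the \(F_E\) parameters \((c_i+C_i,C_i)\), \((d_l+D_l,D_l)\), \((1,1)\), and \((0,1)\) enter unchanged. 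The orders add as \(p=p_0+q_1+1\), \(q=q_0+p_1+1\) and the characteristic indices as \(n=n_1+m_0+1\), \(m=m_1+n_0\), which match the stated orders in (\ref{Pnz_theorem}).

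The main obstacle is purely clerical: Prudnikov's identity requires splitting the parameter lists of the second \(H\)-function at the indices \(n_1\) (upper row) and \(m_1\) (lower row), and then interleaving them around the reflected parameters of the first \(H\)-function in a specific order. I would therefore carry out this bookkeeping carefully to verify that \((1,1)\) and the block \((c_i+C_i,C_i)_{1:n_1}\) land at the start of the upper row, the reflected block \((1-b_l-B_l,B_l)_{1:q_0}\) appears in the middle, and the remaining block \((c_i+C_i,C_i)_{n_1+1:p_1}\) closes the upper row — and symmetrically for the lower row with \((d_l+D_l,D_l)_{1:m_1}\), \((1-a_i-A_i,A_i)_{1:p_0}\), \((0,1)\), \((d_l+D_l,D_l)_{m_1+1:q_1}\). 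I would also check that the contour and parameter restrictions needed for Prudnikov's formula (separation of the \(\Gamma(b_l+B_l s)\) poles from the \(\Gamma(1-a_i-A_i s)\) poles on the chosen Bromwich path \(\mathcal{L}_1\), together with \(\lambda_E/\lambda_B>0\)) are all inherited from the assumed validity of the underlying Fox's \(H\)-function PDFs, so no extra hypotheses are required.
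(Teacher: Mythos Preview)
Your proposal is correct and coincides with the paper's own argument: the paper first sketches the Mellin-transform route via (\ref{Pnz_theorem1}) and then, as an alternative, does exactly what you describe—substitute the $H$-function forms of $f_B$ and $F_E$ into (\ref{def_Pspc}) and collapse the product integral using \cite[eq.~(2.25.1.1)]{prudnikov1990integrals}. Your bookkeeping of the parameter interleaving and orders matches the target expression, so nothing further is needed.
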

\begin{proof}
According to (\ref{Pnz_theorem1}), $\mathcal{M}[F_E(\gamma_B),1-s]$ and $\mathcal{M}[f_B(\gamma_B),s] $ are separately given by
\begin{subequations}
\begin{equation} \label{Me_pnz1}
\begin{split}
\mathcal{M}[F_E(\gamma_B),1-s] = \frac{\kappa_E}{\lambda_E^{2-s}}\Theta_E^{F}(1-s),
\end{split}
\end{equation}
\begin{equation} \label{Me_pnz2}
\mathcal{M}[f_B(\gamma_B),s] = \frac{\kappa_B}{\lambda_B^{s}}\Theta_B^f(s).
\end{equation}
\end{subequations}

Next, substituting (\ref{Me_pnz1}) and (\ref{Me_pnz2}) into (\ref{Pnz_theorem1}), yields the following result
\begin{equation}
\begin{split}
&\mathcal{P}_{nz} = \frac{\kappa_B \kappa_E}{2 \lambda_E^2\pi j} \int_{\mathcal{L}_1} \Theta_B^f(s) \Theta_E^{F}(1-s) \left( \frac{\lambda_B}{\lambda_E} \right)^{-s}ds ,
\end{split}
\end{equation}
Subsequently, directly applying the definition of univariate Fox's $H$-function, the proof is achieved.

Alternatively, we provide another method to prove (\ref{Pnz_theorem}). Revisiting (\ref{def_Pspc}) and directly replacing $f_B(\gamma_B)$ and $F_E(\gamma_B)$ with their expressions, we have
\begin{equation}
\begin{split}
& \mathcal{P}_{nz} = \frac{\kappa_B \kappa_E}{\lambda_E} \int_0^\infty  H_{p_0,q_0}^{m_0,n_0} \left[ {  \lambda_B \gamma_B \left| {\begin{array}{*{20}c}
    {(a_i,A_i)_{i=1:p_0}}   \\
   {(b_l,B_l)_{l=1:q_0}}  \\
\end{array}} \right.} \right] \\
& \hspace{1ex} \times  H_{p_1+1,q_1+1}^{m_1,n_1+1} \left[ {  \lambda_E \gamma_B \left| {\begin{array}{*{20}c}
    {(1,1),(c_i+C_i,C_i)_{i=1:p_1}}   \\
   {(d_l+D_l,D_l)_{l = 1:q_1},(0,1)}  \\
\end{array}} \right.} \right] d\gamma_B,
\end{split}
\end{equation}
where the last step is derived by using the Mellin transform of the product of two Fox's $H$-function \cite[eq. (2.25.1.1)]{prudnikov1990integrals}.
\end{proof}
\subsection{ASC Characterization}
\begin{theorem} \label{theorem_asc}
The ASC over Fox's $H$-function wiretap fading channels is given by 
\begin{equation}
\bar{C}_s =\frac{1}{\ln (2)}(I_1 + I_2 - I_3),
\end{equation}
where $I_1$ and $I_2$ are respectively given by (\ref{I1_final}) and (\ref{I2_final}), shown at the top of this page, and
\begin{figure*}[t] 
\setcounter{MYtempeqncnt}{\value{equation}}
\setcounter{equation}{20}
\begin{subequations}
\begin{equation} \label{I1_final}
I_1 = \frac{\kappa_B \kappa_E}{ \lambda_B \lambda_E } H_{q_0,p_0:2,2:p_1+1,q_1 +1}^{n_0,m_0:1,2:m_1,n_1+1} \left[ {  \frac{1}{\lambda_B}, \frac{\lambda_E}{\lambda_B} \left| {\begin{array}{*{20}c}
    {(1-b_l -B_l;B_l,B_l)_{l=1:q_0}}   \\
   {(1-a_i -A_i;A_i,A_i)_{i=1:p_0}}  \\
\end{array}} \right. 
\left| {\begin{array}{*{20}c}
    {(1,1),(1,1)}   \\
   {(1,1),(0,1)}  \\
\end{array}} \right. 
\left| {\begin{array}{*{20}c}
    {(1,1),(c_i+C_i,C_i)_{i=1:q_1}}  \\
    {(d_l+D_l,D_l)_{l=1:p_1},(0,1)}   \\
\end{array}} \right.} \right],
\end{equation} 
\begin{equation} \label{I2_final}
I_2= \frac{\kappa_B \kappa_E}{ \lambda_B \lambda_E } H_{q_1,p_1:2,2:p_0+1,q_0 +1}^{n_1,m_1:1,2:m_0,n_0+1} \left[ {  \frac{1}{\lambda_E}, \frac{\lambda_B}{\lambda_E} \left| {\begin{array}{*{20}c}
    {(1-d_l -D_l;D_l,D_l)_{l=1:q_1}}   \\
   {(1-c_i -C_i;C_i,C_i)_{i=1:p_1}}  \\
\end{array}} \right. 
\left| {\begin{array}{*{20}c}
    {(1,1),(1,1)}   \\
   {(1,1),(0,1)}  \\
\end{array}} \right. 
\left| {\begin{array}{*{20}c}
    {(1,1),(a_i+A_i,A_i)_{i=1:p_0}}   \\
   {(b_l + B_l,B_l)_{l=1:q_0},(0,1)}  \\
\end{array}} \right.} \right],
\end{equation}
\end{subequations}
\hrulefill
\end{figure*}
\begin{equation} \label{I3_final}
I_3 =  \frac{\kappa_E}{\lambda_E} H_{q_1+2,p_1+2}^{n_1 + 1,m_1 + 2} \left[ { \frac{1}{ \lambda_E} \left|\hspace{-0.6ex} {\begin{array}{*{20}c}
    {(1,1),(1,1),(1-d_l-D_l,D_l)_{l=1:p_1}}   \\
   {(1,1),(1-c_i-C_i,C_i)_{i=1:q_1},(0,1)}  \\
\end{array}} \right.} \hspace{-0.6ex} \right].
\end{equation}
\end{theorem}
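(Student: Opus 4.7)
The plan is to start from the decomposition in (\ref{ASC_DEFI}), pull the $1/\ln(2)$ outside to reduce the three summands from $\log_2$ to $\ln$, and then evaluate the three natural-log integrals $I_1=\int_0^\infty \ln(1+\gamma) f_B(\gamma) F_E(\gamma)\,d\gamma$, $I_2=\int_0^\infty \ln(1+\gamma) f_E(\gamma) F_B(\gamma)\,d\gamma$, and $I_3=\int_0^\infty \ln(1+\gamma) f_E(\gamma)\,d\gamma$ separately, each via the Mellin--Barnes representation of the Fox's $H$-function.

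I would dispatch $I_3$ first, as it is the simplest. Using the standard Fox's $H$-function representation of the natural logarithm, namely the $H_{2,2}^{1,2}$-function with upper parameters $(1,1),(1,1)$ and lower parameters $(1,1),(0,1)$, together with the Fox's $H$-form of $f_E$ in (\ref{CDF_Non}), the integrand becomes a product of two $H$-functions of the same argument, and the identity \cite[eq.~(2.25.1.1)]{prudnikov1990integrals}---already used in the proof of Proposition~1---collapses the integral into the univariate $H$-function in (\ref{I3_final}).

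For $I_1$ (and $I_2$, obtained by interchanging the $B$ and $E$ subscripts), I would substitute Mellin--Barnes integrals for $\ln(1+\gamma_B)$ and $F_E(\gamma_B)$ over independent contours with dummy variables $s_1$ and $s_2$, interchange the order of integration, and perform the inner integral in $\gamma_B$. That inner integral is exactly $\mathcal{M}[f_B,\,1-s_1-s_2]=\kappa_B \lambda_B^{s_1+s_2-1}\,\Theta_B(1-s_1-s_2)$, using the Mellin transform of $f_B$ quoted just after (\ref{CDF_Non}). The remaining double contour integral then factors into three pieces: a set of Gamma factors depending on $s_1+s_2$ (from $\Theta_B$), an $s_1$-only factor (from the logarithm), and an $s_2$-only factor (from $F_E$), which is exactly the integrand in the defining Mellin--Barnes representation of the extended generalised bivariate Fox's $H$-function \cite[eq.~(2.56)]{mathai2009h}. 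Reading off the parameters against that definition yields (\ref{I1_final}), and the parallel derivation with $B$ and $E$ swapped yields (\ref{I2_final}).

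The principal obstacle will be the careful bookkeeping in that last identification step: every Gamma factor produced by $\Theta_B(1-s_1-s_2)$, by the $H$-representation of $\ln(1+\cdot)$, and by the CDF $F_E$ must be placed in the correct slot of the bivariate notation---\textit{main} versus \textit{bracket 1} versus \textit{bracket 2}, numerator versus denominator, inside or outside the $(m,n),(m_1,n_1),(m_2,n_2)$ count bounds---with the $1-s_1-s_2$ shift explaining the $(1-b_l-B_l;\,B_l,B_l)$ and $(1-a_i-A_i;\,A_i,A_i)$ pairs that appear in the ``main'' slot of (\ref{I1_final}). Both the Fubini interchange and the deformation of contours are justified by standard Stirling-type growth estimates on ratios of Gamma functions along vertical strips, which I would only sketch rather than carry out in detail.
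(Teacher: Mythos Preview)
Your proposal is correct and essentially coincides with the paper's own proof: both represent $\ln(1+x)$ as the $H_{2,2}^{1,2}$-function, reduce $I_1$ (and by symmetry $I_2$) to a double Mellin--Barnes integral that is then identified with the bivariate Fox's $H$-function, and handle $I_3$ via the product-of-$H$-functions Mellin identity. The only cosmetic difference is that the paper routes the derivation through the Parseval formula of Theorem~\ref{theorem1}, first forming $\mathcal{M}[g_B(\gamma),1-s]$ as the Mellin transform of a product of two $H$-functions and then reopening it as a contour integral in a second variable, whereas you open the two Mellin--Barnes representations directly and integrate against $f_B$; the resulting double integrals and parameter bookkeeping are identical.
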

\begin{proof}
See Appendix \ref{Appen_asc}
\end{proof}
\subsection{Special Cases}
Accommodating the closed-form expressions for secrecy performance metrics in the corresponding entries in Table \ref{T1}, directly yields the results, as displayed in Table \ref{T2}. After some simple algebraic manipulations, one can observe the obtained results herein are consistent with the existing works \cite{7094262,7856980,OutageLongLett,LongFisherF}.
\begin{table*}[!t]    
\renewcommand{\arraystretch}{1.15}
\caption{Exact expressions of $\mathcal{P}_{out}$, $\mathcal{P}_{nz}$ and $\bar{C}_s$ for different special cases of Fox's $H$-function distribution} 
\label{T2} 
\hspace{0.5cm}  
  \centering  
  \begin{tabular} {p{22pt} p{470pt}}\toprule \hline
& $\mathcal{P}_{out} =   1- \frac{\kappa_B \kappa_E \mathcal{W}}{\lambda_B R_s}  H_{1,0:1,1:1,1}^{0,1:1,1:0,1} \left[ {  \frac{R_s }{ \lambda_E \mathcal{W}},\frac{1}{\lambda_B  \mathcal{W}} \left| {\begin{array}{*{20}c}
    {(2,1,1)}   \\
   {-}  \\
\end{array}} \right. 
\left| {\begin{array}{*{20}c}
   {(1 - \mu_E + \frac{1}{\alpha_E},\frac{1}{\alpha_E})}  \\
   {(1,1)}   \\
\end{array}} \right. 
\left| {\begin{array}{*{20}c}
   {(1 - \mu_B,\frac{1}{\alpha_B})}  \\
   {(0,1)}   \\
\end{array}} \right.} \right]$ \\ \specialrule{0em}{1pt}{1pt} \cdashline{2-2} \specialrule{0em}{1pt}{1pt}
   & $\mathcal{P}_{nz} =  \frac{\kappa_B \kappa_E}{\lambda_B\lambda_E} H_{2,2}^{2,1} \left[ {  \frac{\lambda_E}{\lambda_B} \left| {\begin{array}{*{20}c}
    {(1-\mu_B , \frac{1}{\alpha_B}),(1,1)}   \\
   {(\mu_E,\frac{1}{\alpha_E}),(0,1)} \\
   \end{array}} \right.} \right]$ \\ \specialrule{0em}{1pt}{1pt} \cdashline{2-2} \specialrule{0em}{1pt}{1pt}
\textbf{$\alpha-\mu$ }  & $\bar{C}_s =  \frac{\kappa_B \kappa_E}{\lambda_B \lambda_E} H_{1,0:2,2:1,2}^{0,1:1,2:1,1} \left[ {  \frac{1}{\lambda_B}, \frac{\lambda_E}{\lambda_B} \left| {\begin{array}{*{20}c}
    {(1-\mu_B ;\frac{1}{\alpha_B},\frac{1}{\alpha_B})}   \\
   {-}  \\
\end{array}} \right. 
\left| {\begin{array}{*{20}c}
    {(1,1),(1,1)}   \\
   {(1,1),(0,1)}  \\
\end{array}} \right. 
\left| {\begin{array}{*{20}c}
    {(1,1)}   \\
   {(\mu_E,\frac{1}{\alpha_E}),(0,1)}  \\
\end{array}} \right.} \right]  $ \\ \specialrule{0em}{1pt}{1pt}
 & $\hspace{5ex} + \frac{\kappa_B \kappa_E}{\lambda_B\lambda_E} H_{1,0:2,2:1,2}^{0,1:1,2:1,1} \left[ {  \frac{1}{\lambda_E}, \frac{\lambda_B}{\lambda_E} \left| {\begin{array}{*{20}c}
    {(1-\mu_E ;\frac{1}{\alpha_E},\frac{1}{\alpha_E})}   \\
   {-}  \\
\end{array}} \right. 
\left| {\begin{array}{*{20}c}
    {(1,1),(1,1)}   \\
   {(1,1),(0,1)}  \\
\end{array}} \right. 
\left| {\begin{array}{*{20}c}
    {(1,1)}   \\
   {(\mu_B,\frac{1}{\alpha_B}),(0,1)}  \\
\end{array}} \right.} \right] $ \\ \specialrule{0em}{1pt}{1pt}
&$\hspace{5ex} - \frac{\kappa_E}{\lambda_E} H_{3,2}^{1,3} \left[ {  \frac{1}{\lambda_E} \left| {\begin{array}{*{20}c}
    {(1,1),(1,1),(1-\mu_E,\frac{1}{\alpha_E})}   \\
   {(1,1),(0,1)}  \\
\end{array}} \right.} \right]$ \\  \midrule
& $\mathcal{P}_{out} = 1- \frac{\kappa_B \kappa_E \mathcal{W}}{\lambda_B R_s}  H_{1,0:1,2:1,2}^{0,1:2,1:1,1} \left[ {  \frac{R_s }{ \lambda_E \mathcal{W}},\frac{1}{\lambda_B  \mathcal{W}} \left| {\begin{array}{*{20}c}
    {(2,1,1)}   \\
   {-}  \\
\end{array}} \right. 
\left| {\begin{array}{*{20}c}
   {(2 - m_E,1)}  \\
   {(1,1),(1 + m_{E,s},1)}   \\
   \end{array}} \right. 
\left| {\begin{array}{*{20}c}
   {(1 - m_B,1 )}  \\
   {(m_{B,s},1),(0,1)}   \\
\end{array}} \right.} \right]$ \\ \specialrule{0em}{1pt}{1pt} \cdashline{2-2} \specialrule{0em}{1pt}{1pt}
   & $\mathcal{P}_{nz} =  \frac{\kappa_B \kappa_E}{\lambda_B\lambda_E} H_{3,3}^{2,3} \left[ {  \frac{\lambda_E}{\lambda_B} \left| {\begin{array}{*{20}c}
    {(1,1),(-m_{B,s},1),(1-m_E,1),(0,1)}   \\
   {(m_{E},1),(-1,1),(m_{E,s},1),(0,1)} \\
   \end{array}} \right.} \right]$ \\ \specialrule{0em}{1pt}{1pt} \cdashline{2-2} \specialrule{0em}{1pt}{1pt}
   \specialrule{0em}{1pt}{1pt}
\textbf{F-S $\mathcal{F}$}  & $\bar{C}_s =   \frac{\kappa_B \kappa_E}{\lambda_B \lambda_E} H_{1,1:2,2:2,2}^{1,1:1,2:1,2} \left[ {  \frac{1}{\lambda_B}, \frac{\lambda_E}{\lambda_B} \left| {\begin{array}{*{20}c}
    {(m_B;1,1)}   \\
   {(m_{B,s};1,1)}  \\
\end{array}} \right. 
\left| {\begin{array}{*{20}c}
    {(1,1),(1,1)}   \\
   {(1,1),(0,1)}  \\
\end{array}} \right. 
\left| {\begin{array}{*{20}c}
    {(1,1),(1-m_{E,s},1)}   \\
   {(m_E,1),(0,1)}  \\
\end{array}} \right.} \right]  $ \\ \specialrule{0em}{1pt}{1pt}
 & $\hspace{5ex} + \frac{\kappa_B \kappa_E}{\lambda_B\lambda_E} H_{1,1:2,2:2,2}^{1,1:1,2:1,2} \left[ {  \frac{1}{\lambda_E}, \frac{\lambda_B}{\lambda_E} \left| {\begin{array}{*{20}c}
    {(m_E ;1,1)}   \\
   {(m_{E,s};1,1)}  \\
\end{array}} \right. 
\left| {\begin{array}{*{20}c}
    {(1,1),(1,1)}   \\
   {(1,1),(0,1)}  \\
\end{array}} \right. 
\left| {\begin{array}{*{20}c}
    {(1,1),(1-m_{E,s},1)}   \\
   {(m_E,1),(0,1)}  \\
\end{array}} \right.} \right] $ \\  \specialrule{0em}{1pt}{1pt}
&$\hspace{5ex} - \frac{\kappa_E}{\lambda_E} H_{3,3}^{2,3} \left[ {  \frac{1}{\lambda_E} \left| {\begin{array}{*{20}c}
   {(1,1),(1,1),(1-m_E,1)}  \\
   {(1,1),(m_{E,s},1),(0,1)}   \\
\end{array}} \right.} \right]$ \\ \midrule
& $\mathcal{P}_{out} =   1-\frac{\kappa_B \kappa_E \mathcal{W}}{\lambda_B R_s}   H_{1,0:2,1:2,1}^{0,1:1,1:0,2} \left[ {  \frac{R_s }{ \lambda_E \mathcal{W}},\frac{1}{\mathcal{W}\lambda_B} \left| {\begin{array}{*{20}c}
    {(2,1,1)}   \\
   {-}  \\
\end{array}} \right. 
\left| {\begin{array}{*{20}c}
   {(1 - m_E + \frac{1}{\xi_{E}},\frac{1}{\xi_{E}}),(1 - m_{sE} + \frac{1}{\xi_{sE}},\frac{1}{\xi_{sE}})}  \\
   {(1,1)}   \\
\end{array}} \right. 
\left| {\begin{array}{*{20}c}
   {(1 - m_B ,\frac{1}{\xi_{B}}),(1 - m_{sB},\frac{1}{\xi_{sB}})}  \\
   {(0,1)}   \\
\end{array}} \right.} \right]$\\ \specialrule{0em}{1pt}{1pt} \cdashline{2-2} \specialrule{0em}{1pt}{1pt}
 &$\mathcal{P}_{nz} = \frac{\kappa_B \kappa_E}{\lambda_B\lambda_E} H_{3,3}^{2,3} \left[ {  \frac{\lambda_E}{\lambda_B} \left| {\begin{array}{*{20}c}
    {(1,1),(1-m_B ,\frac{1}{\xi_B} ),(1-m_{sB},\frac{1}{\xi_{sB}} )}   \\
   {(m_E ,\frac{1}{\xi_E}),(m_{sE} ,\frac{1}{\xi_{sE}}),(0,1)}  \\
\end{array}} \right.} \right] $   \\ \specialrule{0em}{1pt}{1pt} \cdashline{2-2} \specialrule{0em}{1pt}{1pt}
\textbf{EGK} & $\bar{C}_s = \frac{\kappa_B \kappa_E}{\lambda_B \lambda_E} H_{2,0:2,2:1,3}^{0,2:1,2:3,0} \left[ {  \frac{1}{\lambda_B}, \frac{\lambda_E}{\lambda_B}  \left| {\begin{array}{*{20}c}
    {(1-m_B;\frac{1}{\xi_B},\frac{1}{\xi_B}),(1-m_{sB},\frac{1}{\xi_{sB}} ,\frac{1}{\xi_{sB}})}   \\
   {-}  \\
\end{array}} \right. 
\left| {\begin{array}{*{20}c}
    {(1,1),(1,1)}   \\
   {(1,1),(0,1)}  \\
\end{array}} \right. 
\left| {\begin{array}{*{20}c}
    {(1,1)}   \\
   {(m_E ,\frac{1}{\xi_E}),(m_{sE},\frac{1}{\xi_{sE}}),(0,1)}  \\
\end{array}} \right.} \right]  $ \\ \specialrule{0em}{1pt}{1pt}
&$ \hspace{3ex} + \frac{\kappa_B \kappa_E}{\lambda_B \lambda_E} H_{2,0:2,2:1,3}^{0,2:1,2:3,0} \left[ {  \frac{1}{\lambda_E}, \frac{\lambda_B}{\lambda_E}  \left| {\begin{array}{*{20}c}
    {(1-m_E;\frac{1}{\xi_E},\frac{1}{\xi_E}),(1-m_{sE};\frac{1}{\xi_{sE}},\frac{1}{\xi_{sE}})}   \\
   {-}  \\
\end{array}} \right. 
\left| {\begin{array}{*{20}c}
    {(1,1),(1,1)}   \\
   {(1,1),(0,1)}  \\
\end{array}} \right. 
\left| {\begin{array}{*{20}c}
    {(1,1)}   \\
   {(m_B ,\frac{1}{\xi_B}),(m_{sB},\frac{1}{\xi_{sB}}),(0,1)}  \\
\end{array}} \right.} \right]$ \\
&$\hspace{3ex}  -\frac{\kappa_E}{\lambda_E} H_{2,4}^{4, 1} \left[ {  \frac{1}{\lambda_E}\left| {\begin{array}{*{20}c}
    {(1,1),(1,1)}   \\
   {(1,1),(m_E - \frac{1}{\xi_E},\frac{1}{\xi_E}),(m_{sE}-\frac{1}{\xi_{sE}},\frac{1}{\xi_{sE}}),(0,1)}  \\
\end{array}} \right.} \right]$ \\ \specialrule{0em}{1pt}{1pt}
\midrule
 \bottomrule
  \end{tabular}  
  \vspace{-0.3cm}
\end{table*}
\section{Asymptotic Secrecy Metrics Characterization } \label{Ana_Asym}
The obtained secrecy expressions are given in terms of either univariate or bivariate Fox's $H$-function. In order to provide more insights at high or low SNR regime, the asymptotic behavior of the three aforementioned secrecy metrics are developed in this section.

According to \cite{7370883}, expansions of the univariate and bivariate Fox’s $H$-functions can be derived by evaluating the residue of the corresponding integrands at the closest poles to the
contour, namely, the minimum pole on the right for large Fox’s $H$-function arguments and the maximum pole on the left for small ones.
\subsection{Asymptotic SOP}
The lower bound of the SOP is still expressed in terms of Fox's $H$-function, in order to study the asymptotic behavior of the SOP, the lower bound is further simplified by expanding the univariate Fox's $H$-function. Consequently, at high $\bar{\gamma}_B$ regime, we have $\frac{1}{\lambda_B} \rightarrow \infty$. By using the expanding rule, the asymptotic SOP is given by (\ref{SOP_ASYMP}), shown at the top of next page.
\begin{figure*}[!t]
\setcounter{MYtempeqncnt}{\value{equation}}
\setcounter{equation}{22}
\begin{equation} \label{SOP_ASYMP}
\begin{split}
\mathcal{P}_{out}^L \approx& 1-  \frac{\kappa_B \kappa_E }{\lambda_B \lambda_E}\frac{ \Gamma(\tau) \prod \limits_{l=1,l\neq g}^{m_0}\Gamma(b_l+B_l + B_l \tau)\prod \limits_{i=1}^{n_1}\Gamma(1 - c_i- C_i + C_i \tau) \prod \limits_{i=1}^{n_0}\Gamma(1 - a_i- A_i + A_i \tau)}{\Gamma(1+\tau) \prod \limits_{l=m_0+1}^{q_0}\Gamma(1-b_l-B_l - B_l \tau)\prod \limits_{i=n_0+1}^{p_0}\Gamma( a_i+ A_i + A_i \tau)\prod \limits_{i=n_1+1}^{p_2}\Gamma( c_i+ C_i - C_i \tau)} \\
& \times \frac{\prod \limits_{l=1}^{m_1}\Gamma(d_l+D_l - D_l \tau)}{\prod \limits_{l=m_1+1}^{q_1}\Gamma(1-d_l-D_l + D_l \tau)}  \left( \frac{\lambda_E}{\lambda_B R_s}\right)^\tau , \ \rm{where} \  \tau = \max \limits_{l=1:m_0} \left(- \frac{b_l+B_l}{B_l} \right), g = \argmax\limits_{l=1:m_0}\left(- \frac{b_l+B_l}{B_l} \right).
\end{split}
\end{equation}
\hrulefill
\vspace{-0.4cm}
\end{figure*}

Taking the case of $\alpha-\mu$ distribution as an example, the lower bound of the SOP is given by
\begin{equation}
\mathcal{P}_{out}^L = 1 - \frac{\kappa_B \kappa_E}{\lambda_B\lambda_E } H_{2,2}^{2,1} \left[ { \frac{ \lambda_B R_s} {\lambda_E}\left| {\begin{array}{*{20}c}
    {\left(1-\mu_E,\frac{1}{\alpha_E}\right),(1,1)}   \\
   {(0,1),\left(\mu_B,\frac{1}{\alpha_B}\right)}  \\
\end{array}} \right.} \right].
\end{equation}
For the sake of high accuracy, the asymptotic SOP at high $\bar{\gamma}_B$ regime is evaluated at $\tau = 0 $ and $\tau = -\alpha_B\mu_B$, and is given by \cite{OutageLongLett}
\begin{equation}
\mathcal{P}_{out} \approx \frac{\Gamma\left(\frac{\alpha_B\mu_B}{\alpha_E} +\mu_E \right)}{\Gamma(1+\mu_B)\Gamma(\mu_E)}\left( \frac{R_s\lambda_B}{\lambda_E}\right)^{\alpha_B\mu_B}.
\end{equation} 
\subsection{Asymptotic PNZ}
The asymptotic PNZ at high or low $\bar{\gamma}_B$ regime, is computed by evaluating the residues of analytical PNZ, given in (\ref{Pnz_theorem}). According to \cite{8338118}, Fox's $H$-function can be further simplified by choosing the dominate term of the Mellin-Barnes type integral. As such, we can evaluate the residue of the PNZ at low $\bar{\gamma}_B$ regime, at the point  
\begin{subequations}
\begin{equation}
\tau =  \min\limits_{l=1:m_1,i=1:n_0} \left(-\frac{d_l + D_l}{D_l},\frac{a_i +A_i-1}{A_i} \right),
\end{equation}
\begin{equation}
g = \argmin\limits_{l=1:m_1,i=1:n_0}\left(-\frac{d_l + D_l}{D_l},\frac{a_i +A_i-1}{A_i}\right).
\end{equation}
\end{subequations}
Assuming the case of a simple pole, the asymptotic PNZ is thereafter given in (\ref{PNZ_ASYMP}).
\begin{figure*}[!t] 
\setcounter{MYtempeqncnt}{\value{equation}}
\setcounter{equation}{26}
\begin{equation} \label{PNZ_ASYMP}
\mathcal{P}_{nz} \approx  \frac{\left( \frac{\lambda_B}{\lambda_E}\right)^s  \frac{\kappa_B \kappa_E}{\lambda_B\lambda_E}\Gamma(-\tau)\prod \limits_{l=1,l\neq g}^{m_1}\Gamma(d_l+D_l + D_l \tau)\prod \limits_{i=1, i\neq g}^{n_0}\Gamma(1 - a_i- A_i + A_i \tau)\prod \limits_{i=1}^{n_1}\Gamma(1 - c_i- C_i - C_i \tau)\prod \limits_{l=1}^{m_0}\Gamma(b_l+B_l - B_l \tau)}{\Gamma(1-\tau)\prod \limits_{i=n_0 + 1}^{p_0}\Gamma(a_i+A_i - A_i \tau)\prod \limits_{l=m_1 +1}^{q_1}\Gamma(1-d_l-D_l - D_l \tau)\prod \limits_{i=n_1 + 1}^{p_1}\Gamma(c_i+C_i + C_i \tau)\prod \limits_{l=m_0 + 1}^{q_0}\Gamma(1-b_l-B_l + B_l \tau)} ,
\end{equation}
\hrulefill
\vspace{-0.2cm}
\end{figure*}

Considering the case of $\alpha-\mu$ as an example, applying the obtained result, the asymptotic PNZ at low $\bar{\gamma}_B$ regime is evaluated at $s = -\alpha_E\mu_E$ and thereafter given by
\begin{equation}
\mathcal{P}_{nz} \approx \frac{\kappa_B \kappa_E}{\lambda_B\lambda_E \mu_E} \Gamma\left( \frac{\alpha_E \mu_E}{\alpha_B} + \mu_B\right)\left(  \frac{\lambda_E}{\lambda_B} \right)^{\alpha_E\mu_E}.
\end{equation}
\subsection{Asymptotic ASC}
By applying the expansion rule, in the case of high $\bar{\gamma}_B$, the asymptotic ASC is given by (\ref{ASC_ASYMP}), which is obtained by individually expanding $I_1$ and $I_2$, respectively.
\begin{figure*}[!t] 
\setcounter{MYtempeqncnt}{\value{equation}}
\setcounter{equation}{28}
\begin{subequations} \label{ASC_ASYMP}
\begin{equation}
\begin{split}
I_1 &\approx \frac{\kappa_B \kappa_E}{\lambda_B \lambda_E} \left[ \ln\left( \frac{1}{\lambda_B} \right) + \sum_{l=1}^{m_0} B_l \Psi_0(b_l + B_l + B_l u) -  \sum_{l=m_l +1}^{q_0} B_l \Psi_0(b_l + B_l + B_l u) -  \sum_{i=1}^{p_0} A_i \Psi_0(a_i + A_i + A_i u)   \right] \\
& \quad \times  \frac{  \left(\frac{\lambda_E}{\lambda_B} \right)^s\Gamma(u)\prod \limits_{l=1,l\neq g}^{m_0} \Gamma( b_l +B_l + B_l u)\prod \limits_{i=1,i\neq g}^{n_1} \Gamma(1- c_i -C_i + C_i u )\prod \limits_{l=1}^{m_1} \Gamma( d_l +D_l - D_l u) }{\Gamma(1+u)\prod \limits_{l=m_1+1}^{q_0} \Gamma(1 - b_l -B_l - B_l u)\prod \limits_{i=n_1+1}^{p_0} \Gamma( a_i +A_i + A_i u)\prod \limits_{i=n_2 +1}^{p_1} \Gamma( c_i +C_i - C_i u)\prod \limits_{l=m_2+1}^{q_1} \Gamma(1- d_l - D_l - D_l u)} ,\\
& {\rm{where}} \ u= \max\limits_{l=1:m_0,i=1:n_1} \left[0, \left(-\frac{b_l + B_l}{B_l} \right)_{l = 1:m_0},  \left(\frac{c_i + C_i - 1}{c_i} \right)_{i = 1:n_1}\right], g = \argmax\limits_{l=1:m_0,i=1:n_1} \left[0, \left(-\frac{b_l + B_l}{B_l} \right)_{l = 1:m_0},  \left(\frac{c_i + C_i - 1}{c_i} \right)_{i = 1:n_1}\right],
\end{split}
\end{equation}
\hrulefill
\begin{equation}
\begin{split}
I_2&  \approx  \prod \limits_{l=1,l \neq g}^{m_0}\Gamma(b_l +B_l - B_l u) \left( \frac{\lambda_E}{\lambda_B}\right)^u  \frac{\kappa_B \kappa_E}{\lambda_B \lambda_E}\frac{\Gamma(u)\prod \limits_{i=1}^{n_0}\Gamma(1-a_i -A_i + A_i u)}{\Gamma(1+u)\prod \limits_{i=1}^{n_0+1}\Gamma(a_i+A_i-A_i u)\prod \limits_{l=1}^{m_0+1}\Gamma(1-b_l-B_l+B_l u)}  \\
&\quad \times H_{q_1+2,p_1+2}^{n_1 + 1,m_1 + 2} \left[ {  \lambda_E \left| {\begin{array}{*{20}c}
    {(1,1),(1,1),(1-d_l-D_l,D_l)_{l=1:p_1},(1,1)}   \\
   {(0,1),(0,1),(d_l +D_l + D_l s),(1-c_i-C_i,C_i)_{i=1:q_1}}  \\
\end{array}} \right.} \right],\ {\rm{where}} \ u = 1 + \min\left(\frac{b_l }{B_l}\right)_{l=1:m_0}, g = \argmin\limits_{l=1:m_0}\left(\frac{b_l }{B_l}\right).
\end{split}
\end{equation}
\end{subequations}
\hrulefill
\vspace{-0.2cm}
\end{figure*}
The detailed proof for (\ref{ASC_ASYMP}) is referenced to Appendix. \ref{Appendix_ASC}.

Similarly, taking the case of $\alpha-\mu$ as an example, we get the asymptotic ASC at high $\bar{\gamma}_B$ regime as
\begin{subequations}
\begin{equation}
I_1 \approx \frac{\kappa_B \kappa_E}{\lambda_B \lambda_E}\Gamma(\mu_B) \Gamma(\mu_E) \left[\frac{\Psi_0(\mu_B)}{\alpha_B} - \ln(\lambda_B) \right],
\end{equation}
\begin{equation}
I_2 \approx \frac{\kappa_B \kappa_E \left(\frac{\lambda_B}{\lambda_E} \right)^{\alpha_B \mu_B}}{\mu_B\lambda_B \lambda_E}   H_{2,3}^{3,1} \left[ {  \lambda_E \left| \hspace{-1ex}{\begin{array}{*{20}c}
    {(0,1),(1,1)}   \\
   {(\mu_E+\frac{\alpha_B\mu_B}{\alpha_E},\frac{1}{\alpha_E}),(0,1),(0,1)}  \\
\end{array}} \right.}\hspace{-1ex} \right].
\end{equation}
\end{subequations}

\vspace{-0.5cm}
\section{Colluding Eavesdropping Scenario} \label{Label_Colluding}
In this section, we mainly focus on the secrecy issue when multiple eavesdroppers appear and work in a cooperative manner. 
\subsection{System Model}
Consider the scenario that $L$ eavesdroppers are in the presence and work cooperatively to wiretap the main link. It is assumed that all $L$ eavesdroppers are single-antenna equipped, and undergoes independent fading conditions. As a result of collusion \cite{8327926}, the so-called eavesdropper is assumed to either use the MRC or the SC scheme. All the wiretap links and main link undergo independent Fox's $H$-function fading channels. Consequently, the instantaneous received SNR at the so-called $L$-colluding eavesdropper with MRC scheme is given by 
\begin{equation} \label{gamma_Co_MRC}
\gamma_C= \sum \limits_{r =1}^L \gamma_{e,r},
\end{equation}
or with SC scheme
\begin{equation} \label{gamma_Co_SC}
\gamma_C= \max \{ \gamma_{e,1},\cdots,\gamma_{e,l},\cdots,\gamma_{e,L} \}, 
\end{equation} 
where $\gamma_{e,r}$ is the instantaneous received SNR of each eavesdropper. Clearly, (\ref{gamma_Co_MRC}) corresponds to a maximum ratio combining (MRC) decoding which is the best strategy that the super eavesdropper can use. As we can see from ($\ref{gamma_Co_MRC}$), $\gamma_C$ is the sum of $L$ independent Fox's $H$-function distributed RVs, the PDF and CDF of $\gamma_C$ are thus respectively given by \cite[eqs. (8) and (9)]{8338118}
\begin{figure*}[t] 
\setcounter{MYtempeqncnt}{\value{equation}}
\setcounter{equation}{32}
\begin{subequations}
\begin{equation} \label{PDF_Colluding}
f_C(\gamma) = \frac{\eta_C}{\gamma} H \left[ \left. \hspace{-0.5ex} \begin{array}{*{20}c}
\left(\begin{matrix}
0,0\\0,1
\end{matrix} \right) \\
\left(\begin{matrix}
m_r,n_r +1\\p_r+1,q_r
\end{matrix} \right)_{r=1:L} \\
\end{array}  \hspace{-0.5ex} \right|\hspace{-0.5ex}
 \left.  {\begin{array}{*{20}c}
    {-}   \\
    {(1;(1)_{r=1:L})}  \\
    \left[\begin{matrix}
    {(1,1),(c_i+C_i,C_i)_{i=1:q_r}}  \\
    {(d_l+D_l,D_l)_{l=1:p_r}}   \\
    \end{matrix}\right]_{r=1:L}
\end{array}}   \right| \hspace{-0.5ex}
 {\begin{array}{*{20}c}
\left( \lambda_r \gamma \right)_{r=1:L}
\end{array}} \right],\quad \gamma > 0,
\end{equation}
\begin{equation} \label{CDF_Colluding}
F_C(\gamma) = \eta_CH \left[ \left. \begin{array}{*{20}c}
\left(\begin{matrix}
0,0\\0,1
\end{matrix} \right) \\
\left(\begin{matrix}
m_r,n_r +1\\p_r+1,q_r
\end{matrix} \right)_{r=1:L} \\
\end{array} \right|
 \left. {\begin{array}{*{20}c}
    {-}   \\
    {(0;(1)_{r=1:L})}  \\
    \left[\begin{matrix}
    {(1,1),(c_i+C_i,C_i)_{i=1:q_r}}  \\
    {(d_l+D_l,D_l)_{l=1:p_r}}   \\
     \end{matrix}\right]_{r=1:L}
\end{array}}\right|
 {\begin{array}{*{20}c}
\left( \lambda_r \gamma \right)_{r=1:L}
\end{array}} \right],
\end{equation}
\end{subequations}
\hrulefill
\vspace{-0.1cm}
\end{figure*}
where $\eta_C = \prod \limits_{e=1}^M \frac{\kappa_{E,e}}{\lambda_{E,e}}$.

Similarly, the PDF and CDF of instantaneous SNR deployed with SC scheme is given by \cite{752901}
\begin{subequations}
\begin{equation} \label{PDF_CoSC}
f_C(\gamma) = \sum \limits_{\tau = 1}^L f_{e,\tau}(\gamma)\prod \limits_{l=1,\\l\neq\tau}^L F_{e,l}(\gamma),
\end{equation}
\begin{equation} \label{CDF_CoSC}
F_C(\gamma) = \prod \limits_{l=1}^L F_{e,l}(\gamma),
\end{equation}
\end{subequations}
where $f_{e,\tau}(\gamma)$ and $F_{e,l}(\gamma)$ are the corresponding PDF and CDF of the instantaneous received SNR of each eavesdropper, which are given in terms of univariate Fox's $H$-function.

It is worthy to mention that the multivariate Fox's $H$-function PDF and CDF of the equivalent super-eavesdropper makes it difficult to seek the exact SOP and ASC for the colluding scenario. Resultantly, we intend to provide the lower bound of the SOP and exact PNZ for this case.
\subsection{Secrecy Characterization of SOP}
\begin{theorem}
The SOP over Fox's $H$-function wiretap fading channels in the presence of $L$-colluding eavesdroppers with MRC scheme is lower bounded by (\ref{SOP_MRC_Colluding}), shown at the top of next page.
\end{theorem}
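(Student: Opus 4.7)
The plan is to start from the asymptotic lower bound already established in the non-colluding case, namely
\begin{equation*}
\mathcal{P}_{out}^L = \int_0^\infty F_B(R_s \gamma)\, f_C(\gamma)\, d\gamma,
\end{equation*}
but now with the eavesdropper PDF $f_C(\gamma)$ replaced by the multivariate Fox's $H$-function density in (\ref{PDF_Colluding}) corresponding to the sum $\gamma_C=\sum_{r=1}^{L}\gamma_{e,r}$ under MRC. Using the CCDF representation $F_B(R_s\gamma)=1-\bar{F}_B(R_s\gamma)$, where $\bar{F}_B$ is the univariate Fox's $H$-function CCDF from Section~\ref{Sec_Preliminary}, I can rewrite this as $\mathcal{P}_{out}^L = 1 - \int_0^\infty \bar{F}_B(R_s\gamma)\, f_C(\gamma)\, d\gamma$, which will ultimately produce the $1-(\cdots)$ structure visible in the non-colluding counterparts (\ref{pout_theorem}) and (\ref{PoutAsy_theorem}).

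Next, I would express both factors via their Mellin--Barnes contour integrals: the CCDF $\bar{F}_B(R_s\gamma)$ as a single Mellin--Barnes integral with variable $s_0$, and the colluding PDF $f_C(\gamma)$ as an $L$-fold Mellin--Barnes integral with variables $s_1,\dots,s_L$ (one per eavesdropper branch), each carrying the kernel $(\lambda_r\gamma)^{-s_r}$. After swapping the order of integration (justified by absolute convergence along the contours), the inner integral over $\gamma$ reduces to $\int_0^\infty \gamma^{-s_0-s_1-\cdots-s_L}\,d\gamma$ weighted by the $R_s$ factor. This will generate an additional gamma factor $\Gamma(1-s_0-\sum_r s_r)/\Gamma(\cdot)$ through the standard Mellin-integral identity (equivalently by treating $1$ as a degenerate Fox's $H$-function of order $(0,0;\dots)$ and applying the multivariable analogue of \cite[eq.~(2.25.1.1)]{prudnikov1990integrals}).

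Collecting the surviving Mellin--Barnes integral over $(s_0,s_1,\dots,s_L)$, together with the factor $R_s^{-s_0}$ and the product $\prod_r \lambda_r^{-s_r}$, and matching the resulting kernel against the definition of the multivariate Fox's $H$-function \cite[eq.~(2.56)]{mathai2009h}, delivers the claimed closed form (\ref{SOP_MRC_Colluding}) with the variables $\{R_s/\lambda_B\}\cup\{\lambda_r\}_{r=1}^L$ appearing as the $(L{+}1)$ arguments and with the parameter rows inherited componentwise from those of $\bar{F}_B$ and each $F_{e,r}$.

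The main obstacle will be the bookkeeping: tracking which gamma factors sit in the numerator versus denominator of the $(L{+}1)$-variate kernel, and correctly linking the ``coupling'' row (the one that carries the $\gamma$-integration constraint $1-s_0-\sum_r s_r$) to the upper block of the multivariate $H$-function template in (\ref{PDF_Colluding}). Once this indexing is handled consistently with the convention used in (\ref{PDF_Colluding})--(\ref{CDF_Colluding}), and the contours $\mathcal{L}_0,\dots,\mathcal{L}_L$ are chosen to separate left and right poles of every gamma factor involved (as in Theorem~\ref{theorem1}), the result follows directly.
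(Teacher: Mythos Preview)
Your overall strategy---open the Bob distribution and $f_C$ as Mellin--Barnes integrals, swap with the $\gamma$-integral, and read off a multivariate Fox $H$-function---is the paper's method as well. Two points in your sketch diverge from the target expression, however.

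First, the paper works with $F_B$ directly rather than the CCDF, so (\ref{SOP_MRC_Colluding}) is \emph{not} of the form $1-(\cdots)$; your CCDF route would yield an equivalent but differently parametrised formula, not the one stated.

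Second, and more substantively, you predict an $(L{+}1)$-variate $H$-function with a separate argument attached to the Bob block. Inspect (\ref{SOP_MRC_Colluding}): it is $L$-variate, with arguments $\bigl(\lambda_{e,r}/(\lambda_B R_s)\bigr)_{r=1:L}$, and the parameters inherited from $F_B$ populate the \emph{outer} coupling block, not an inner block of their own. The cause is the $\gamma$-integral step. After inserting the $\gamma^{-1}$ that sits in front of the multivariate $H$ in (\ref{PDF_Colluding}), the inner integral is $\int_0^\infty \gamma^{-1-s_0-\sum_r s_r}\,d\gamma$, and this does \emph{not} produce a gamma ratio---it diverges for every value of the exponent. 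The clean way through is to leave $F_B(R_s\gamma)$ unexpanded and evaluate its Mellin transform at the point $-\sum_r s_r$ via \cite[eq.~(2.8)]{mathai2009h}; that yields gamma factors depending on $\sum_r s_r$, which is precisely why they land in the coupling row of an $L$-variate $H$-function. Equivalently, if you insist on expanding $F_B$ as well, the $\gamma$-integral enforces the constraint $s_0=-\sum_r s_r$ (a Mellin delta, not a Beta/gamma identity), again collapsing the result to $L$ contour variables. With that correction the bookkeeping you outline goes through and delivers (\ref{SOP_MRC_Colluding}).
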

\begin{figure*}[t] 
\setcounter{MYtempeqncnt}{\value{equation}}
\setcounter{equation}{34}
\begin{equation} \label{SOP_MRC_Colluding}
\mathcal{P}_{out,MRC}^L = \frac{\eta_C \kappa_B}{\lambda_B} H \left[ \left. \begin{array}{*{20}c}
\left(\begin{matrix}
n_0+1,m_0\\q_0+1,p_0+2
\end{matrix} \right) \\
\left(\begin{matrix}
m_r,n_r +1\\p_r+1,q_r
\end{matrix} \right)_{r=1:L} \\
\end{array}  \right|
 \left. {\begin{array}{*{20}c}
    {(1 - b_i- B_i,(B_i)_{r=1:L}),(1,(1)_{r=1:L})}  \\  
    {(0,(1)_{r=1:L}),(1-a_i-A_i,(A_i)_{r=1:L}),(1,(1)_{r=1:L})}   \\      
    \left[\begin{matrix}
    {(1,1),(c_i+C_i,C_i)_{i=1:q_2}}  \\
    {(d_l+D_l,D_l)_{l=1:p_2}}   \\
    \end{matrix}\right]_{r = 1:L}
\end{array}} \right|
 {\begin{array}{*{20}c}
 \left(\frac{\lambda_{e,r}}{\lambda_B R_s} \right)_{r = 1:L}
\end{array}} \right],
\end{equation}
\hrulefill
\vspace{-0.3cm}
\end{figure*}
\begin{proof}
Plugging (\ref{CDF_Non}) and (\ref{PDF_Colluding}) into $\mathcal{P}_{out,MRC} = \int_0^{\infty}  F_{B}(\gamma_0) f_{C}(\gamma_C) d\gamma_C$, then re-expressing the univariate Fox's $H$-function and multivariate Fox's $H$-function in terms of their definition, and performing the interchange of the Mellin-Barnes integrals and the definite integral, with the help of \cite[eqs.(3.194.3) and (8.384.1)]{gradshteyn2014table}, we arrive at the final expression of $\mathcal{P}_{out,MRC}$ in (\ref{SOP_MRC_Colluding}).
\end{proof}

\begin{theorem}\label{Theorem_6}
The SOP over Fox's $H$-function wiretap fading channels in the presence of $L$-colluding eavesdroppers with SC scheme is lower bounded by (\ref{SOP_SC_Colluding}), shown on the top of this page.
\end{theorem}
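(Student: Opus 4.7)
The plan is to mirror the reasoning used for the MRC case (Theorem preceding it), but with the selection-combining CDF $F_{C}(\gamma)=\prod_{l=1}^{L}F_{e,l}(\gamma)$ from (\ref{CDF_CoSC}) replacing the multivariate MRC PDF. First, I would start from the high-SNR lower bound
\begin{equation*}
\mathcal{P}_{out,SC}^{L}\;\approx\;\Pr\!\left(\log_{2}\!\frac{\gamma_{B}}{\gamma_{C}}<R_{t}\right)\;=\;\int_{0}^{\infty}F_{B}(R_{s}\gamma)\,f_{C}(\gamma)\,d\gamma ,
\end{equation*}
just as was done for the non-colluding case leading to (\ref{PoutAsy_theorem}). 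Integrating by parts and using $F_{C}(0)=0$, $F_{B}(\infty)=F_{C}(\infty)=1$, this rearranges to
\begin{equation*}
\mathcal{P}_{out,SC}^{L}\;=\;1-R_{s}\int_{0}^{\infty}\Bigl(\prod_{l=1}^{L}F_{e,l}(\gamma)\Bigr)f_{B}(R_{s}\gamma)\,d\gamma ,
\end{equation*}
which avoids having to expand the sum form of $f_C$ in (\ref{PDF_CoSC}) and puts the product structure directly into the integrand.

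Next, I would plug in the univariate Fox's $H$-function representations: $F_{B}$ from (\ref{CDF_Non}), each $F_{e,l}$ from the CDF form in (\ref{CDF_Non}) with the $l$-th eavesdropper's parameters, and $f_{B}$ from (\ref{PDF_Non}). Each factor is then rewritten through its Mellin-Barnes defining integral, introducing one complex variable $s_{r}$ per eavesdropper and one variable $s_{0}$ for $f_{B}$. After interchanging the finite Mellin-Barnes contours with the $\gamma$-integral (justified by absolute convergence on a suitable strip), the inner integral in $\gamma$ reduces to
\begin{equation*}
\int_{0}^{\infty}\gamma^{-s_{0}-\sum_{r=1}^{L}s_{r}-1}\,d\gamma ,
\end{equation*}
which, as in the proof of the MRC theorem, is evaluated through the Mellin transform of $f_{B}(R_{s}\gamma)$, contributing a Beta/Gamma kernel $\Gamma(1-s_{0}-\sum_{r}s_{r})/\Gamma(-s_{0}-\sum_{r}s_{r})$ together with the factor $(\lambda_{B}R_{s})^{s_{0}+\sum_{r}s_{r}}$. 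This fuses the $L+1$ separate Mellin-Barnes integrals into one $L$-fold contour integral tying the variables $s_{1},\ldots,s_{L}$ through a common argument.

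The final step is pattern-matching the resulting multiple Mellin-Barnes integral against the definition of the multivariate Fox's $H$-function \cite[eq. (2.56)]{mathai2009h}: the gamma factors common to all $L$ variables (the Bob-related ones plus the new $\Gamma$ coming from the $\gamma$-integral) populate the ``head'' block $\left(\begin{smallmatrix}m,n\\p,q\end{smallmatrix}\right)$, while each eavesdropper's gamma factors populate one of the $L$ ``tail'' blocks $\left(\begin{smallmatrix}m_{r},n_{r}\\p_{r},q_{r}\end{smallmatrix}\right)$, with arguments $\lambda_{e,r}/(\lambda_{B}R_{s})$. I anticipate the main obstacle to be purely bookkeeping: correctly partitioning each $F_{e,l}$'s upper/lower parameter rows into the $n_{r}$, $m_{r}$, $p_{r}$, $q_{r}$ slots, aligning the ``glue'' factor coming from the $\gamma$ integral with the head block of the multivariate $H$, and choosing a contour that simultaneously separates the left and right poles in every variable; there is no new analytic content beyond what was already handled for the MRC case, just a carefully larger index table.
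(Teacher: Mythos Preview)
Your integration-by-parts idea is a genuinely different route from the paper's. The paper does \emph{not} rewrite the integral via $F_C=\prod_l F_{e,l}$; instead it keeps the PDF expansion (\ref{PDF_CoSC}),
\[
\mathcal{P}_{out,SC}^{L}=\sum_{\tau=1}^{L}\int_0^\infty F_B(R_s\gamma)\,f_{e,\tau}(\gamma)\prod_{l\neq\tau}F_{e,l}(\gamma)\,d\gamma,
\]
and evaluates each summand separately by the Mellin transform of a product of univariate Fox's $H$-functions. That is why (\ref{SOP_SC_Colluding}) is stated as a \emph{sum of $L$ multivariate $H$-functions}, one per $\tau$, with the $\tau$-th eavesdropper's PDF parameters populating the head block. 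Your approach would instead produce an expression of the form $1-(\text{a single }L\text{-variate }H)$, with Bob's parameters in the head block. This is equivalent and arguably tidier, but it is not (\ref{SOP_SC_Colluding}) as written; if the goal is to reproduce the paper's formula, you need to work term-by-term from (\ref{PDF_CoSC}), not from (\ref{CDF_CoSC}).

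Separately, there is a technical slip in your $\gamma$-integral step. If you expand \emph{all} $L{+}1$ factors into Mellin--Barnes integrals (introducing $s_0$ for $f_B$ as well), the inner integral $\int_0^\infty \gamma^{-s_0-\sum_r s_r-1}\,d\gamma$ is divergent for every choice of contour and cannot be ``evaluated through the Mellin transform of $f_B$'' --- that function has already been expanded. The correct procedure (and the one the paper uses throughout) is to expand only $L$ of the factors and leave one Fox's $H$ intact; the $\gamma$-integral then becomes the Mellin transform of that remaining $H$-function at the shifted argument $1-\sum_r s_r$, which supplies the gamma factors for the head block. No Beta kernel $\Gamma(1-s_0-\sum_r s_r)/\Gamma(-s_0-\sum_r s_r)$ arises here; that ratio appears in the \emph{exact} SOP proof (Appendix~\ref{appendix1}) because of the $(1+\tfrac{R_s}{\mathcal W}\gamma)^{-\xi}$ factor, which is absent in the lower bound.
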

\begin{figure*}[t] 
\setcounter{MYtempeqncnt}{\value{equation}}
\setcounter{equation}{35}
\begin{equation} \label{SOP_SC_Colluding}
\mathcal{P}_{out,SC}^L =\sum \limits_{l=1}^L \frac{\eta_C \kappa_B}{\lambda_B} H \left[ \left. \begin{array}{*{20}c}
\left(\begin{matrix}
n_l,m_l\\q_l,p_l
\end{matrix} \right) \\
\left(\begin{matrix}
m_0,n_0 +1\\p_0+1,q_0+1
\end{matrix} \right) \\
\left(\begin{matrix}
m_r,n_r +1\\p_r+1,q_r
\end{matrix} \right)_{r =1:l-1} \\
\left(\begin{matrix}
m_{r+1},n_{r+1}+1\\p_{r+1}+1,q_{r+1}+1
\end{matrix} \right)_{r = l+1:L} \\
\end{array}  \right|
 \left. {\begin{array}{*{20}c}
    {(1 -d_i+D_i,(D_i)_{r =1:L} )}  \\  
    {(1-c_i+C_i,(C_i)_{r=1:L})}   \\
    {(1,1),(a_i+A_i,A_i)_{i=1:p_0}}  \\
    {(b_l+B_l,B_l)_{l=1:q_0}}   \\
    \left[\begin{matrix} 
    {(1,1),(c_i+C_i,C_i)_{i=1:p_r}}  \\
    {(d_l+D_l,D_l)_{l=1:q_r},(0,1)}   \\
    \end{matrix}\right]_{r=1:l-1}\\
    \left[\begin{matrix} 
    {(1,1),(c_i+C_i,C_i)_{i=1:p_r}}  \\
    {(d_l+D_l,D_l)_{l=1:q_r},(0,1)}   \\
    \end{matrix}\right]_{r=l+1,L}    
\end{array}} \right|
 {\begin{array}{*{20}c}
\frac{\lambda_B}{\lambda_{e,l}} \\  \left(\frac{\lambda_{e,r}}{\lambda_{e,l}}\right)_{r=1:l-1} \\ 
\left(\frac{\lambda_{e,r}}{\lambda_{e,l}}\right)_{r=l+1:L} \\ 
\end{array}} \right],
\end{equation}
\hrulefill
\vspace{-0.3cm}
\end{figure*}
\begin{proof}
Accordingly, by doing some simple substitutions, $\mathcal{P}_{out,SC}^L$ can be rewritten as
\begin{equation}
\mathcal{P}_{out,SC}^L = \sum \limits_{\tau = 1}^L \int_0^\infty F_B(R_s\gamma) f_{e,\tau}(\gamma)\prod \limits_{l=1,\\l\neq\tau}^L F_{e,l}(\gamma) d\gamma,
\end{equation}
next, by using the Mellin transform of multiple univariate Fox's $H$-function, the proof is achieved. 
\end{proof}
\subsection{Secrecy Characterization of PNZ}
\begin{theorem}
The PNZ over Fox's $H$-function wiretap fading channels in the presence of $L$-colluding eavesdroppers with MRC scheme is given by (\ref{PNZ_MRC_Colluding}), shown on the top of this page.
\end{theorem}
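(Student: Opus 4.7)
The plan is to follow the Mellin-based route that worked for Theorem~5, adapted to the PNZ definition (\ref{def_Pspc}). First, I would rewrite the quantity of interest as
\begin{equation*}
\mathcal{P}_{nz,MRC}^L = \int_0^\infty f_B(\gamma)\,F_C(\gamma)\,d\gamma,
\end{equation*}
and substitute the univariate Fox's $H$-function PDF $f_B$ from (\ref{PDF_Non}) together with the multivariate-$H$ CDF $F_C$ of the super-eavesdropper from (\ref{CDF_Colluding}). Each factor is then replaced by its defining Mellin--Barnes representation: $f_B$ contributes a single contour integral in a variable $s_0$ with kernel $\Theta_B^f(s_0)$, and $F_C$ contributes an $L$-fold contour integral in $(s_1,\ldots,s_L)$ with the product kernel indexed by $r=1,\ldots,L$ and a coupling factor $\Gamma(-\sum_r s_r)/\Gamma(1-\sum_r s_r)$ arising from the ``$(1;(1)_{r=1:L})$'' row that distinguishes the CDF from the PDF in (\ref{CDF_Colluding}).

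Next, I would swap the $(L{+}1)$ Mellin--Barnes contours with the outer $d\gamma$ integration, which is legitimate on the usual strips of absolute convergence employed in \cite{7089281,8338118}, and then evaluate the resulting monomial integral in $\gamma$. Equivalently, and more directly, I would invoke Parseval's formula of Theorem~\ref{theorem1} for the pair $(f_B,F_C)$: the Mellin transform $\mathcal{M}[f_B,s]=\kappa_B\lambda_B^{-s}\Theta_B^f(s)$ is read off from Section~\ref{Sec_Preliminary}, while the $L$-fold Mellin--Barnes representation of $F_C$ supplies a weighted ``multivariate Mellin transform.'' Either route produces the same effect: the $s_0$ contour collapses through the linear constraint $s_0 = 1 - \sum_{r=1}^L s_r$, just as in Theorem~5 by means of \cite[eqs.~(3.194.3) and (8.384.1)]{gradshteyn2014table}. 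The reduction is actually cleaner here than in Theorem~5, because the PNZ integrand contains no threshold shift $\mathcal{W}$ or rate factor $R_s$, so no beta-function evaluation of a $(1+\beta\gamma)^{-\nu}$ factor is required.

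What remains after the collapse is an $L$-fold Mellin--Barnes integral of a product of gamma functions organized into three blocks: a main-channel block $\Theta_B^f(1-\sum_r s_r)$, the $L$ eavesdropper kernel blocks indexed by $r$, and the coupling block in $\sum_r s_r$. Identifying these three blocks against the canonical format of the multivariate Fox's $H$-function \cite[eq.~(2.56)]{mathai2009h} yields the claimed closed form (\ref{PNZ_MRC_Colluding}) with argument vector $(\lambda_{e,r}/\lambda_B)_{r=1:L}$, in direct analogy with the SOP expression (\ref{SOP_MRC_Colluding}). The main obstacle, as in Theorem~5, is not analytic but notational: each shifted pair $(1-a_i-A_i,A_i)$, $(b_l+B_l,B_l)$, $(c_i+C_i,C_i)$, $(d_l+D_l,D_l)$ must be placed in the correct slot of the multivariate $H$-symbol---upper versus lower row, main-channel versus $r$-th branch---and the sign and orientation of each $s_r$ inside every $\Gamma(\cdot)$ must match the canonical convention; once that bookkeeping is performed, the proof is complete.
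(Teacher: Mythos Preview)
Your proposal is essentially the paper's approach and arrives at the right object: substitute $f_B$ from (\ref{PDF_Non}) and $F_C$ from (\ref{CDF_Colluding}) into (\ref{def_Pspc}), open the multivariate $H$ as an $L$-fold Mellin--Barnes integral, interchange with the $\gamma$-integral, and read off (\ref{PNZ_MRC_Colluding}). One caveat: your ``expand both factors and evaluate the resulting monomial integral in $\gamma$'' route is not the step that actually closes, since $\int_0^\infty \gamma^{-s_0-\sum_r s_r}\,d\gamma$ diverges on any admissible contour and there is no $(1+\beta\gamma)^{-\nu}$ factor here to rescue it via \cite[eq.~(3.194.3)]{gradshteyn2014table} as in the SOP case; the paper therefore expands \emph{only} $F_C$ and recognizes the remaining $\gamma$-integral as $\mathcal{M}[f_B,\,1-\sum_r s_r]=\kappa_B\lambda_B^{-(1-\sum_r s_r)}\Theta_B^f\!\left(1-\sum_r s_r\right)$, which is exactly the ``collapse'' you describe and yields the $L$-fold integral with main-channel block $\Theta_B^f(1-\sum_r s_r)$ and arguments $(\lambda_{e,r}/\lambda_B)_{r=1:L}$. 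With that adjustment your bookkeeping plan for placing the $(a_i,A_i)$, $(b_l,B_l)$, $(c_i,C_i)$, $(d_l,D_l)$ pairs into the multivariate $H$-symbol is precisely what is needed.
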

\begin{figure*}[t] 
\setcounter{MYtempeqncnt}{\value{equation}}
\setcounter{equation}{37}
\begin{equation} \label{PNZ_MRC_Colluding}
\mathcal{P}_{nz,MRC} = \frac{\eta_C \kappa_B}{\lambda_B} H \left[ \left. \begin{array}{*{20}c}
\left(\begin{matrix}
n_0,m_0\\q_0,p_0 + 1
\end{matrix} \right) \\
\left(\begin{matrix}
m_r,n_r +1\\p_r+1,q_r
\end{matrix} \right)_{r=1:L} \\
\end{array}  \right|
 \left. {\begin{array}{*{20}c}
    {(1-b_i -B_i;(B_i)_{r=1:L})_{i=1:q_0}}   \\
    {(1-a_i -A_i;(A_i)_{r=1:L})_{i=1:p_0},(0;(1)_{r=1:L})}  \\
    \left[\begin{matrix}    
    {(1,1),(c_i+C_i,C_i)_{i=1:q_r}}  \\
    {(d_l+D_l,D_l)_{l=1:p_r}}   \\
    \end{matrix}\right]_{r=1:L}
\end{array}} \right|
 {\begin{array}{*{20}c}
\left( \frac{\lambda_{e,r} }{\lambda_B} \right)_{r=1:L}
\end{array}} \right],
\end{equation}
\hrulefill
\vspace{-0.3cm}
\end{figure*}
\begin{proof}
Substituting (\ref{PDF_Non}) and (\ref{CDF_Colluding}) into (\ref{def_Pspc}), then re-expressing the multivariate Fox's $H$-function in terms of its definition and interchanging the order of two integrals, we directly obtain (\ref{PNZ_MRC_Colluding}).
\end{proof}
\begin{theorem}
The PNZ over Fox's $H$-function wiretap fading channels in the presence of $L$-colluding eavesdroppers with SC scheme is given by (\ref{PNZ_SC_Colluding}), shown on the top of this page.
\end{theorem}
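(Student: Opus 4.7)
The starting point is the standard identity for non-zero secrecy capacity, $\mathcal{P}_{nz}=\Pr(\gamma_B>\gamma_C)=\int_0^\infty f_B(\gamma_B)F_C(\gamma_B)\,d\gamma_B$, together with the SC expression for the super-eavesdropper's CDF in (\ref{CDF_CoSC}), $F_C(\gamma)=\prod_{l=1}^L F_{e,l}(\gamma)$. Unlike the MRC case where $F_C$ was already in multivariate Fox's $H$-function form (\ref{CDF_Colluding}), here the multivariate structure must be manufactured from the product of $L$ univariate CDFs. So the plan is to substitute these ingredients, express every Fox's $H$-function via its Mellin--Barnes contour integral, push the $\gamma_B$ integral inside, and recognize the result as a multivariate Fox's $H$-function.

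Concretely, the first step is to write $f_B(\gamma_B)$ using (\ref{PDF_Non}) and each $F_{e,l}(\gamma_B)$ using the second form of the CDF given in Section \ref{Sec_Preliminary}, i.e.\ $F_{e,l}(\gamma)=\frac{\kappa_{e,l}}{\lambda_{e,l}}H_{p_l+1,q_l+1}^{m_l,n_l+1}[\lambda_{e,l}\gamma\,|\,\cdots]$. Writing the univariate $H$ of Bob via its Mellin--Barnes integral with variable $s_0$ and each eavesdropper's $H$ via its own contour variable $s_l$, $l=1,\dots,L$, the integrand becomes a product of $L+1$ contour kernels times $\gamma_B^{-s_0-\sum_l s_l}$ (up to the constant shifts in $s$). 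Swapping orders of integration (justified by absolute convergence on a suitable strip) reduces the $\gamma_B$ integral to $\int_0^\infty \gamma_B^{-s_0-\sum_l s_l}d\gamma_B$, which collapses to a single Gamma factor via the Mellin transform of $f_B$; equivalently, one invokes $\mathcal{M}[f_B(\gamma),1-\sum_l s_l]=\kappa_B\lambda_B^{\sum_l s_l-1}\Theta_B^f(1-\sum_l s_l)$ from the Mellin pair given in Section \ref{Sec_Preliminary}.

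The remaining object is an $L$-fold Mellin--Barnes integral over $s_1,\dots,s_L$ whose kernels are the Gamma products associated with each $F_{e,l}$ coupled through a shared linear argument in Bob's $\Theta_B^f$. Matching this against the definition of the multivariate Fox's $H$-function in \cite[eq.\ (2.56)]{mathai2009h} yields (\ref{PNZ_SC_Colluding}) directly, with the $L$ outer indices $\binom{m_l,n_l+1}{p_l+1,q_l+1}$ coming from each eavesdropper's CDF and the coupling block $\binom{n_0,m_0}{q_0,p_0+1}$ (after the usual reflection formula that swaps $(a,A)\leftrightarrow(b,B)$ in Bob's $\Theta_B^f$) arising from Bob's PDF. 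The arguments $(\lambda_{e,r}/\lambda_B)_{r=1:L}$ enter naturally as the scaling ratios.

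The main obstacle is purely bookkeeping rather than analytic: one must track the reflection $(a_i,A_i)\leftrightarrow(1-a_i-A_i,A_i)$ that occurs when $\Theta_B^f(1-\sum_l s_l)$ is re-inserted into the multivariate template, and one must verify that the convergence strips for the $L+1$ contours are jointly non-empty so that the Fubini interchange is legitimate; both follow from the standard hypotheses $A_i,B_l,C_i,D_l>0$ together with the regularity of $\Theta_B^f$ and $\Theta_{e,l}^F$ on the indicated strip, as in the MRC case of Theorem \ref{Theorem_6}. No new analytic idea beyond Theorem \ref{theorem1} and the Mellin--Barnes calculus used in the preceding proofs is required.
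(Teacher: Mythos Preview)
Your proposal is correct and follows essentially the same route as the paper: substitute $f_B$ from (\ref{PDF_Non}) and the SC product CDF (\ref{CDF_CoSC}) into $\mathcal{P}_{nz}=\int_0^\infty f_B(\gamma_B)\prod_{l=1}^L F_{e,l}(\gamma_B)\,d\gamma_B$, open each univariate Fox's $H$-function as a Mellin--Barnes integral, swap the order of integration, and read off the multivariate $H$-function---which is exactly what the paper means by ``following the same methodology used in Theorem~\ref{Theorem_6}.'' One minor bookkeeping slip: the coupling block in (\ref{PNZ_SC_Colluding}) is $\bigl(\begin{smallmatrix}n_0,m_0\\q_0,p_0\end{smallmatrix}\bigr)$, not $\bigl(\begin{smallmatrix}n_0,m_0\\q_0,p_0+1\end{smallmatrix}\bigr)$; the extra $(0,1)$ pairs sit in the per-eavesdropper blocks here, unlike the MRC case.
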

\begin{figure*}[t] 
\setcounter{MYtempeqncnt}{\value{equation}}
\setcounter{equation}{38}
\begin{equation} \label{PNZ_SC_Colluding}
\mathcal{P}_{nz,SC} = \frac{\eta_C \kappa_B}{\lambda_B} H \left[ \left. \begin{array}{*{20}c}
\left(\begin{matrix}
n_0,m_0\\q_0,p_0 
\end{matrix} \right) \\
\left(\begin{matrix}
m_r,n_r +1\\p_r+1,q_r +1
\end{matrix} \right)_{r=1:L} \\
\end{array}  \right|
 \left. {\begin{array}{*{20}c}
    {(1-b_i -B_i;(B_i)_{r=1:L})_{i=1:q_0}}   \\
    {(1-a_i -A_i;(A_i)_{r=1:L})_{i=1:p_0}}  \\
    \left[\begin{matrix}    
    {(1,1),(c_i+C_i,C_i)_{i=1:q_r}}  \\
    {(d_l+D_l,D_l)_{l=1:p_r},(0,1)}   \\
    \end{matrix}\right]_{r=1:L}
\end{array}} \right|
 {\begin{array}{*{20}c}
\left(\frac{\lambda_{e,r} }{\lambda_B} \right)_{r=1:L}
\end{array}} \right],
\end{equation}
\hrulefill
\vspace{-0.3cm}
\end{figure*}
\begin{proof}
Substituting (\ref{PDF_Non}) and (\ref{CDF_CoSC}) into 
\begin{equation}
\begin{split}
\mathcal{P}_{NZ,SC} &= \int_0^\infty f_B(\gamma_B) F_{C,SC}(\gamma_B) d\gamma_B \\
& = \int_0^\infty f_B(\gamma_B) \prod\limits_{l=1}^L F_{e,l}(\gamma_B) d\gamma_B ,
\end{split}
\end{equation}
then following the same methodology used in Theorem \ref{Theorem_6}, the proof is obtained.
\end{proof}
\section{Numerical Results and discussions} \label{Sec_Num}
In this section, Monte-Carlo simulations are used to validate the analytical derivations obtained in Sections \ref{Ana_noncoll} and \ref{Label_Colluding}, particularly, over one special case of Fox's $H$-function wiretap fading channel, i.e., $\alpha-\mu$ wiretap fading channels\footnotemark[2]\footnotetext[2]{It is worthy to mention that (i) the $\alpha$-$\mu$ fading channel is implemented by using the WAFO toolbox\cite{WAFO}; (ii) the numerical evaluation of univariate and bivariate Fox's $H$-function for MATLAB implementations are based on the method proposed in \cite[Table. II]{peppas2012simple} and \cite[Appendix. A]{6141996}, respectively.}. It is noted that bullets represent the  simulation results whereas solid lines are used to show the analytical expressions.
\subsection{Non-colluding sceanrio}
In order to validate the analytical accuracy of our derivations, Monte-Carlo simulation outcomes together with analytical results are presented in Figs. \ref{Pout_fig}-\ref{ASC_Asy_fig}, with regard to the aforementioned three secrecy performance metrics over $\alpha-\mu$ fading channels. Apparently, these figures show that our mathematical representations are in perfect agreements with the simulation results.  
\begin{figure}[!t]
\centering
\includegraphics[width=\columnwidth]{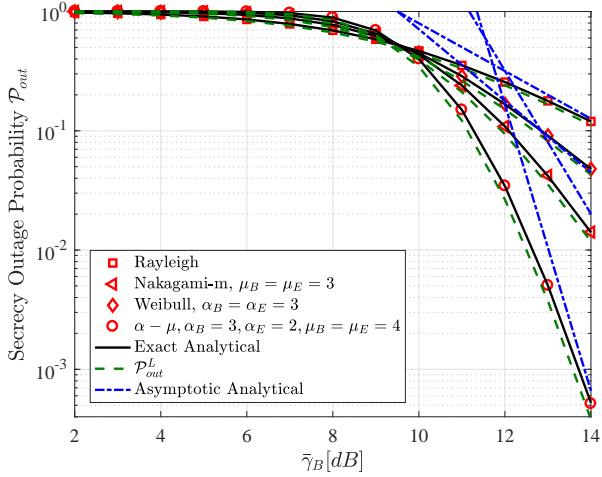}
\caption{$\mathcal{P}_{out}$ versus the average $\bar{\gamma}_B$ over Rayleigh, Nakagami-$m$, Weibull and $\alpha-\mu$ fading channels when $\bar{\gamma}_E=0$ dB and $R_t=0.5$, respectively.}
\label{Pout_fig}
\end{figure} 

In Fig. \ref{Pout_fig}, the SOP against $\bar{\gamma}_B$ is plotted for several fading scenarios, such as Rayleigh, Weibull, Nakagami-\textit{m}, and $\alpha-\mu$. As observed from the figure, specifically, the Nakagami-\textit{m} ($\alpha = 2$, $\mu =m$) against Rayleigh ($\alpha = 2$, $\mu=1$), and Rayleigh against Weibull ($\alpha$ is the fading parameter, $\mu =1$), one can conclude that larger $\alpha$ and $\mu$ values result in lower SOP. This is mainly because lower $\alpha$ and $\mu$ values represent serious non-linearity and sparse clustering, i.e., worse channel conditions \cite{7856980}. This phenomenon also remains true for the PNZ, as shown in Fig. \ref{Pnz_fig}. In addition, the lower bound of SOP and the asymptotic SOP are also plotted. It is observed that the lower bound of the SOP, i.e., $
\mathcal{P}_{out}^L$ offers a better SOP performance trend prediction, on the other hand, the asymptotic SOP gradually approximates the exact SOP with higher accuracy as $\bar{\gamma}_B$ increases. 
\begin{figure}[!t]
\centering
\includegraphics[width=\columnwidth]{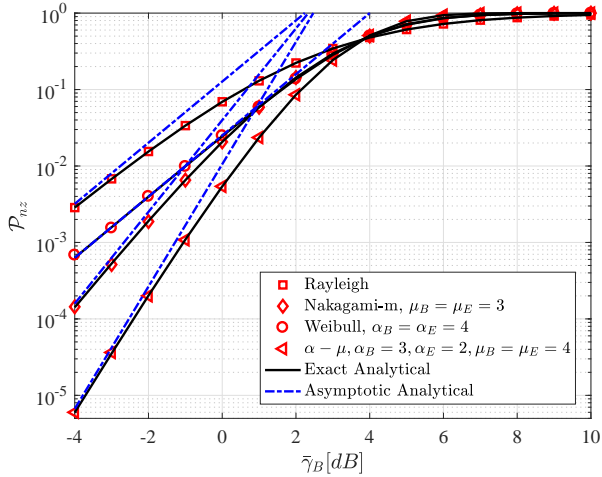}
\caption{$\mathcal{P}_{nz}$ versus the average $\bar{\gamma}_B$ for selected fading parameters when $\bar{\gamma}_E$ = 4 dB.}
\label{Pnz_fig}
\end{figure} 

As depicted in Fig. \ref{Pnz_fig}, both the exact and asymptotic behavior of $\mathcal{P}_{nz}$ are plotted against $\bar{\gamma}_B$ for Rayleigh, Weibull, Nakagami-\textit{m}, and $\alpha-\mu$. Compared with the exact result, one can conclude that our asymptotic PNZ behaves well at low $\bar{\gamma}_B$ regime. 
\begin{figure}[!t]
\centering
\includegraphics[width=\columnwidth]{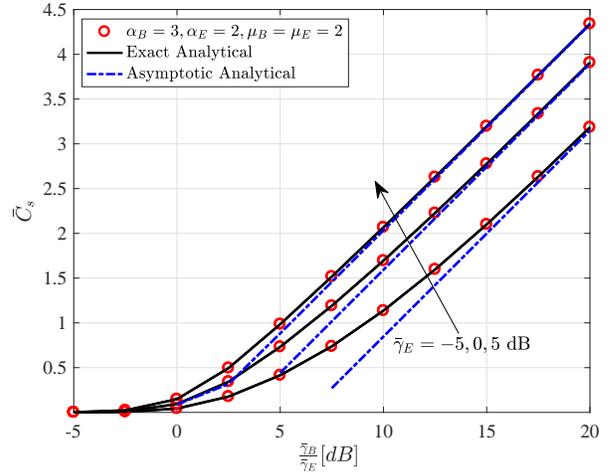}
\caption{$\bar{C}_s$ versus $\frac{\bar{\gamma}_B}{\bar{\gamma}_E}$ over $\alpha-\mu$ wiretap fading channels.}
\label{ASC_Asy_fig}
\end{figure} 

The ASC against the ratio of $\bar{\gamma}_B$ and $\bar{\gamma}_E$ is presented in Fig. \ref{ASC_Asy_fig}, and as expected, there is a perfect match between our analytical and simulated results. Also, one can obtain two insights from this graph: on one hand, lower $\alpha$ values lead to higher ASC, no matter whoever experiences severe fading. The insight obtained from this figure just vividly demonstrates how information-theoretic security exploits the fading property of wireless transmission medium to ensure secure transmission. On the other hand, a potential malicious eavesdropper can also benefit from poor channel conditions, since worse fading channels reversely enable them to better access and wiretap the main channel to a certain extent. Finally, to obtain a fair comparison, the asymptotic ASC is also depicted in Fig. \ref{ASC_Asy_fig}. Again, it can be seen that the asymptotic ASC presents a highly accurate approximation to the exact ASC, especially at high $\bar{\gamma}_B$ regime. 
\vspace{-0.3cm}
\subsection{Colluding scenario}
In this subsection, both the lower bound of SOP and PNZ are presented over $\alpha-\mu$, F-S $\mathcal{F}$, and EGK fading channels, respectively. For the simplicity of notations, it is assummed that all eavesdroppers undergo similar fading condition, i.e., similar fading parameters. It is noted that the implementation of multivariate Fox's $H$-function is available in Python \cite[Appendix A]{7089281} and MATLAB \cite{Hatim2018_K_Factor}. 

Fig. \ref{SOP_MRCSC_fig} demonstrates the analytical $\mathcal{P}_{MRC,out}^L$ and $\mathcal{P}_{SC,out}^L$ together with simulated SOP over $\alpha-\mu$, F-S $\mathcal{F}$, and EGK fading channels, respectively. One can perceive that our derived lower bound of SOP can closely approximate the exact SOP. As the number of cooperative eavesdroppers increases, the gap between the lower bound of SOP and exact SOP gradually becomes smaller. On the other hand, the increase of the number of $L$ contributes largely to the $\mathcal{P}_{out,MRC}^L$ when MRC scheme is employed, compared to the $\mathcal{P}_{out,SC}$ case.

Apart from Fig. \ref{SOP_AlphaMu_Fig}, we also compared the simulated and analytical SOPs for the following two scenarios: (i) changing $\gamma_E$ while fixing $R_t$, as shown in Fig.\ref{SOP_EGK_Fig}; and (ii) changing $R_t$ while keeping $\gamma_E$ constant, as depicted in Fig. \ref{SOP_FishF_Fig}. Apparently, one can obtain the following two observations. On one hand, Fig. \ref{SOP_EGK_Fig} shows that the lower bound of the SOP is becoming increasingly tight with the decrease of lower $R_t$. Different from \ref{SOP_EGK_Fig}, Fig.\ref{SOP_FishF_Fig} portrays that higher $\bar{\gamma}_E$ makes the lower bound of SOP sufficiently approximates the exact SOP. Those two observations can be mathematically explained from the definition of the lower bound of SOP, i.e., $\mathcal{P}(\gamma_B <(R_s \gamma_C + \mathcal{W})) \approx \mathcal{P}(\gamma_B <(R_s \gamma_C ))$. This condition can be met when $R_t$ goes to 0 ($\mathcal{W} = 2^{R_t} -1$), or $\gamma_C \gg \mathcal{W}$. 
\begin{figure} 
    \centering
  \subfloat[$\alpha-\mu$, $\alpha_B =2, \alpha_E =4, \mu_B = \mu_E = 3$\label{SOP_AlphaMu_Fig}]{%
        \includegraphics[width=\columnwidth]{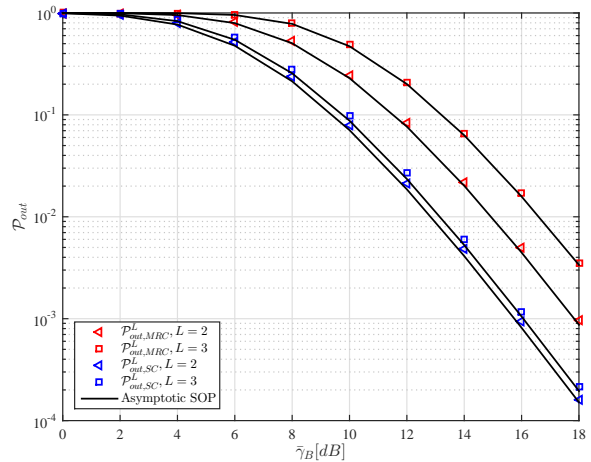}} \\
  \subfloat[EGK, $m_B =m_E= 2,m_{sB}=m_{sE}=4, \xi_B =\xi_{sB}=\xi_E =\xi_{sE}=1$\label{SOP_EGK_Fig}]{%
        \includegraphics[width=\columnwidth]{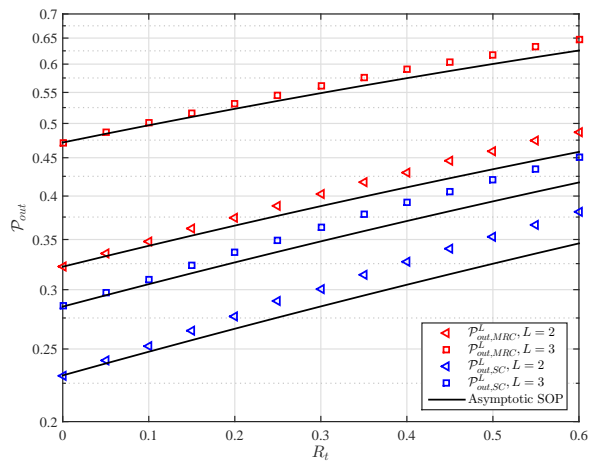}}\\    
  \subfloat[F-S $\mathcal{F}$, $m_B=m_E =2, m_{B,s}=m_{E,s} = 3$\label{SOP_FishF_Fig}]{%
        \includegraphics[width=\columnwidth]{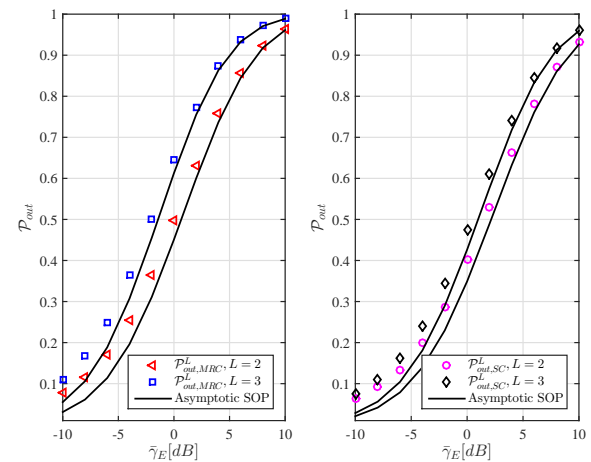}}\hfill
  \caption{The lower bound of SOP, i.e., $\mathcal{P}_{out}^L$ over Fox's $H$-function fading channels, (a) $\alpha- \mu$, $ \alpha_B= 2 $ (b) EGK; and (c) F-S $\mathcal{F}$.}
  \label{SOP_MRCSC_fig} 
\end{figure}

Likewise, in Fig. \ref{PNZ_MRCSC_fig}, the PNZ given in (\ref{PNZ_MRC_Colluding}) and (\ref{PNZ_SC_Colluding}) are plotted and compared with Monte-Carlo simulation. The validity of our presented PNZ expressions are examined over the $\alpha-\mu$, F-S $\mathcal{F}$, and EGK fading channels, respectively. Each figure witnesses perfect agreements between the exact analysis and simulated results. Besides, it is clear that the influences of $L$ on $\mathcal{P}_{nz,MRC}$ is larger than that on $\mathcal{P}_{nz,SC}$. This is obviously due to the MRC and SC schemes.
\begin{figure}[!t]
    \centering
  \subfloat[$\alpha-\mu, \alpha_B = 2, \alpha_{E} =4, \mu_B =  \mu_{E} = 3$\label{PNZ_AlphaMu_Fig}]{%
        \includegraphics[width=\columnwidth]{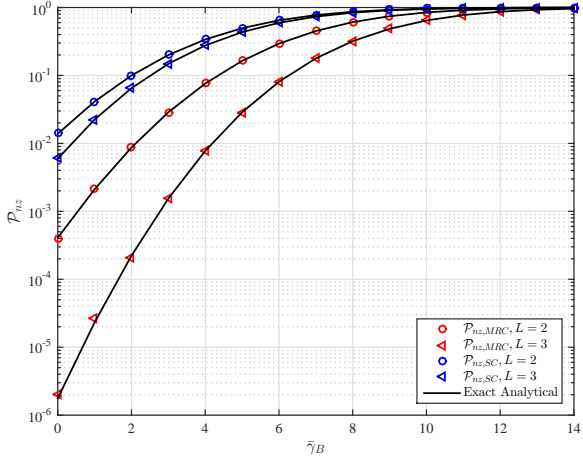}}\\
  \subfloat[EGK, $m_B =m_E= 2,m_{sB}=m_{sE}=4, \xi_B =\xi_{sB}=\xi_E =\xi_{sE}=1$\label{PNZ_EGK_Fig}]{%
        \includegraphics[width=\columnwidth]{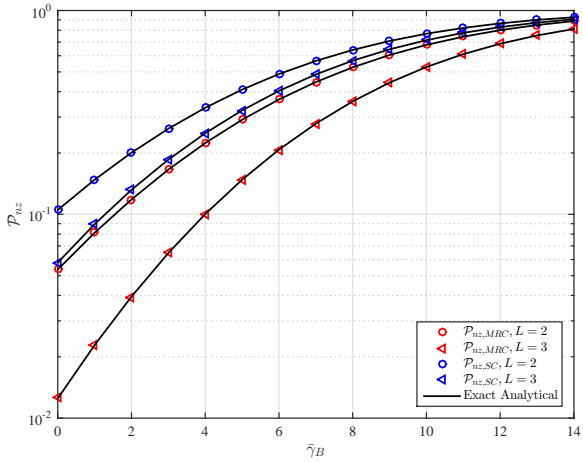}}\\    
  \subfloat[F-S $\mathcal{F}, m_B = m_E =2, m_{s,B} = m_{s,E} =3$\label{PNZ_FishF_Fig}]{%
        \includegraphics[width=\columnwidth]{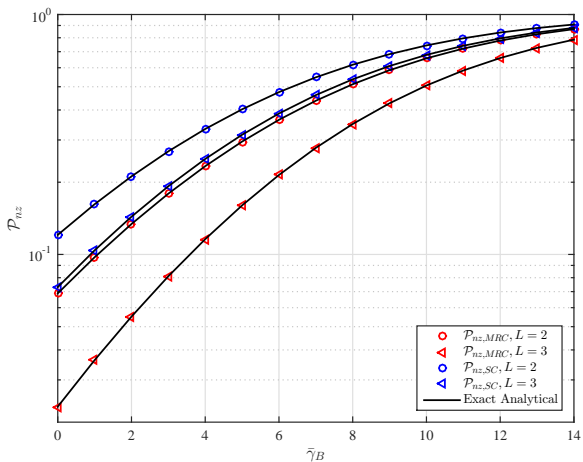}}\hfill
\caption{$\mathcal{P}_{nz,MRC}$, $\mathcal{P}_{nz,SC}$ versus $\bar{\gamma}_B$ over (a) $\alpha-\mu$, (b) F-S $\mathcal{F}$, and (c) EGK wiretap fading channels.}
\label{PNZ_MRCSC_fig}
\end{figure} 
\section{Conclusion} \label{Sec_con}
Since Fox's $H$-function fading channel can subsume most of the fading models, this paper comprehensively investigated the PLS over Fox's $H$-function wiretap fading channels, with consideration of the non-colluding and colluding eavesdropping scenarios. For the former non-colluding case, secrecy metrics, including the SOP, PNZ, and ASC, are derived with closed-form expressions in a general and unified manner. Those expressions are given in terms of the univariate or bivariate Fox's $H$-function. In addition, those closed-form expressions were further simplified to acquire the asymptotic behavior of the secrecy metrics. The asymptotic ones were much simpler and highly accurate for practical usage. In the presence of colluding eavesdroppers, a super eavesdropper employing by MRC or SC schemes were formulated, and subsequently the lower bound of SOP and the exact PNZ were provided in terms of multivariate Fox's H-function. Both scenarios are further demonstrated by Monte-Carlo simulations.

In addition, for the sake of providing more insights on some well-known fading models, several special cases of Fox's $H$-function distribution were particularly explored, including $\alpha-\mu$, F-S $\mathcal{F}$, and EGK. Those examples were further elaborated with the general form, and their accuracy was also compared with Monte-Carlo simulation results. As observed and discussed, the advantages of those general mathematical representations are listed as follows: (i) they are consistent with the existing works; (ii) they provide a unified generic approach to other fading models which can be expanded in terms of Fox's $H$-function fading distribution; and (iii) they provide a promising secrecy performance analysis framework when colluding eavesdroppers are undergoing different independent fading conditions.
\appendices
\section{Proof of the Theorem \ref{theorem_pout}} \label{appendix1}
At the very beginning, revisiting (\ref{Pout_theorem1})
\begin{equation} \label{Par_pout}
\begin{split}
 \mathcal{P}_{out} & = \int_0^{\infty}  F_{B}(\gamma_0) f_{E}(\gamma_E) d\gamma_E \\
 & = 1 - \int_0^{\infty}  \bar{F}_{B}(\gamma_0) f_{E}(\gamma_E) d\gamma_E \\
&=1 - \frac{1}{2\pi j}\int_{\mathcal{L}_1} \mathcal{M}[\bar{F}_B(\gamma_0),1-s] \mathcal{M}[f_E(\gamma_E),s] ds,
\end{split}
\end{equation}
 and using the definition of Mellin transform and Fox's $H$-function, we arrive at $\mathcal{M}[F_B(s)]$
\begin{equation} \label{Mell_1}
\begin{split}
&\mathcal{M}[F_B(\gamma_0),1-s] = \int_0^\infty \gamma_E^{-s} F_B(\gamma_0) d\gamma_E \\
& \mathop=^{(a)} \frac{\kappa_B}{2 \lambda_B\pi j} \int_{\mathcal{L}_1}\Theta_B^{F}(\xi) \lambda_B^{-\xi} \int_0^\infty \gamma_E ^{-s} \gamma_0^{-\xi} d\gamma_E d \xi ,
\end{split}
\end{equation}
where step $(a)$ is developed by interchanging the order of two integrals. The inner integral in (\ref{Mell_1}) can be further expressed as 
\begin{equation} \label{innerint}
\begin{split}
&\int_0^\infty \gamma_E ^{-s} \gamma_0^{-\xi} d\gamma_E  = \mathcal{W}^{-\xi}  \int_0^\infty \gamma_E ^{-s} \left(1 + \frac{R_s}{\mathcal{W}}\gamma_E  \right)^{-\xi} d\gamma_E  \\
& \mathop=^{(b)} \frac{\mathcal{B}(1-s,\xi + s -1)}{\mathcal{W}^{\xi}} \left( \frac{R_s}{\mathcal{W}} \right)^{s-1} \mathop=^{(c)}  \frac{\Gamma(1-s)\Gamma(\xi + s -1)}{\Gamma(\xi)\mathcal{W}^{\xi} \left( \frac{R_s}{\mathcal{W}} \right)^{1 -s} }  ,
\end{split}
\end{equation}
where step $(b)$ is developed from \cite[eq. (3.194.3)]{gradshteyn2014table}, and step $(c)$ is obtained by using $\mathcal{B}(x,y) = \frac{\Gamma(x)\Gamma(y)}{\Gamma(x+y)}$ \cite[eq. (8.384.1)]{gradshteyn2014table}.

Plugging (\ref{innerint}) into (\ref{Mell_1}), yields the result given in (\ref{Mell_Fb}),
\begin{figure*}[!t]
\setcounter{MYtempeqncnt}{\value{equation}}
\setcounter{equation}{33}
\begin{equation} \label{Mell_Fb}
\begin{split}
\mathcal{M}[F_B(\gamma_0),1-s] &\mathop=^{(d)} \frac{\kappa_B}{2\lambda_B \pi j}\left( \frac{R_s}{\mathcal{W}} \right)^{s-1} \Gamma(1-s)  \int_{\mathcal{L}_1}   \frac{\Gamma(\xi + s -1) \Theta_B^{F}(\xi)}{\Gamma(\xi)} (\lambda_B\mathcal{W})^{-\xi} d \xi \\
& = \frac{\kappa_B \Gamma(1-s)}{\lambda_B }\left( \frac{R_s}{\mathcal{W}} \right)^{s-1} 
 H_{p_0 + 2,q_0 +2}^{m_0 + 1,n_0+1} \left[ {  \lambda_B \mathcal{W} \left| \hspace{-1ex} {\begin{array}{*{20}c}
    {(1,1),(a_j + A_j,A_j)_{j=1:p_0},(0,1)}   \\
   {(s-1,1),(b_j + B_j,B_j)_{j=1:q_0},(0,1)}  \\
\end{array}} \right.} \hspace{-1.5ex} \right].
\end{split}
\end{equation}
\hrulefill
\vspace{-0.4cm}
\end{figure*}
and step $(d)$ is directly achieved from the definition of bivariate Fox's $H$-function.

Subsequently, substituting (\ref{Mell_Fb}) and $\mathcal{M}[f_E(\gamma_E),s] =\frac{\kappa_E \chi_E^f(s)}{\lambda_E^{s}}  $ into (\ref{Par_pout}), yields the following result
\begin{equation}
\begin{split}
&\mathcal{P}_{out} = 1-\frac{\kappa_B \kappa_E  \mathcal{W} }{4\lambda_B R_s \pi^2 } \int_{\mathcal{L}_1} \int_{\mathcal{L}_2} \frac{\Gamma(\xi + s -1) \Theta_B^{\bar{F}}(\xi)}{\Gamma(\xi) \left(\lambda_B\mathcal{W}\right)^{\xi}} \\
&\hspace{5ex}\times \Gamma(1-s) \Theta_E^f(s) \left( \frac{R_s}{\lambda_E\mathcal{W}} \right)^{s} d\xi ds,
\end{split}
\end{equation}
Next, deploying the definition of the bivariate Fox's $H$-function \cite{gradshteyn2014table}, the proof is achieved.

\section{Proof for Theorem \ref{theorem_asc} } \label{Appen_asc}
Since the logarithm function can be alternatively re-expressed in terms of Fox's $H$-function with the help from \cite[eq. (8.4.6.5)]{prudnikov1990integrals} and \cite[eq. (8.3.2.21)]{prudnikov1990integrals},
\begin{equation}
\ln(1+x) =  H_{2,2}^{1,2} \left[ {  x \left| {\begin{array}{*{20}c}
    {(1,1),(1,1)}   \\
   {(1,1),(0,1)}  \\
\end{array}} \right.} \right],
\end{equation}
For the ease of proof, we take the proof for $I_1$ as an example.
\begin{equation} \label{I1_Mell}
\begin{split}
I_1 &= \frac{1}{2\pi j} \int_{\mathcal{L}_1} \mathcal{M}[F_E(\gamma_B),s]\mathcal{M}[g(\gamma_B),1-s]ds,
\end{split}
\end{equation}
\begin{equation} \label{Mell_fB}
\begin{split}
\mathcal{M}[F_E(\gamma_B),s] = \frac{\kappa_E}{\lambda_E^{1+s}} \Theta_E^{F}(s),
\end{split}
\end{equation}
where $\Theta_E^{F}(s)$ is shown in (\ref{ThetaEs}) on the top of next page.
\begin{figure*}[!t]
\setcounter{MYtempeqncnt}{\value{equation}}
\setcounter{equation}{38}
\begin{equation}  \label{ThetaEs}
\begin{split}
\Theta_E^{F}(s)& = \frac{\Gamma(-s) \prod \limits_{l=1}^{m_1} \Gamma(d_l + D_l + D_l s) }{\Gamma(1-s)\prod \limits_{l=m_1 +1}^{q_1} \Gamma(1-d_l - D_l - D_l s) }  \frac{\prod \limits_{i=1}^{n_1}\Gamma(1 - c_i - C_i -C_i s)}{\prod \limits_{i=n_1+1}^{p_1}\Gamma( c_i + C_i + C_i s )},
\end{split}
\end{equation}
\hrulefill
\vspace{-0.4cm}
\end{figure*}

$\mathcal{M}[g_k(\gamma_k),1-s]$ can be regarded as the Mellin transform of the product of two Fox's $H$-function \cite[eq. (2.25.1.1)]{prudnikov1990integrals}, which is given by (\ref{Mell_gBB}) and shown on the top of next page.
\begin{figure*}[!t]
\setcounter{MYtempeqncnt}{\value{equation}}
\setcounter{equation}{39}
\begin{equation} \label{Mell_gBB}
\begin{split}
\mathcal{M}[g_B(\gamma_B),1-s]&=\int_0^\infty \gamma_B^{-s} \ln(1+\gamma_B) f_B(\gamma_B)  d\gamma_B = \kappa_B\int_0^\infty \gamma_B^{-s} H_{2,2}^{1,2} \left[ {  x \left| {\begin{array}{*{20}c}
    {(1,1),(1,1)}   \\
   {(1,1),(0,1)}  \\
\end{array}} \right.} \right] 
 H_{p_0,q_0}^{m_0,n_0} \left[ {  \lambda_B \gamma_B  \left| {\begin{array}{*{20}c}
    {(a_j,A_j)_{j=1:p_0}}   \\
   {(b_j,B_j)_{j=1:q_0}}  \\
\end{array}} \right.} \right] d\gamma_B \\
& =\frac{\kappa_B}{\lambda_B^{1-s}}H_{q_0+2,p_0+2}^{n_0+1,m_0 + 2} \left[ {  \frac{1}{\lambda_B}  \left|  {\begin{array}{*{20}c}
    {(1,1),(1,1),(1-b_j -(1-s)B_j,B_j)_{j=1:q_1}}   \\
   {(1,1),(1-a_j - (1-s)A_j,A_j)_{j=1:p_1},(0,1)}  \\
\end{array}} \right.}  \right]
\end{split}
\end{equation}
\hrulefill
\vspace{-0.4cm}
\end{figure*}

Next, substituting (\ref{Mell_fB}) and (\ref{Mell_gBB}) into (\ref{I1_Mell}), yields 
\begin{equation}
\begin{split}
I_1 = - \frac{\kappa_B \kappa_E}{4 \pi^2 \lambda_B \lambda_E } \int_{\mathcal{L}_1} \int_{\mathcal{L}_2} \frac{\Theta(s,\xi) \Theta(\xi) \Theta_E^F(s)}{ \left( \frac{\lambda_E }{\lambda_B} \right) ^{s}  \lambda_B^\xi} ds d\xi,
\end{split}
\end{equation}
where $\Theta(s,\xi)$ and $\Theta(\xi)$ are given by (\ref{Thetasxi}), shown on the top of next page.
\begin{figure*}[!t]
\setcounter{MYtempeqncnt}{\value{equation}}
\setcounter{equation}{41}
\begin{equation} \label{Thetasxi}
\begin{split}
\Theta(s,\xi)& = \frac{\prod \limits_{i=1}^{n_0} \Gamma(1-a_i - A_i + A_i s + A_i \xi) \prod \limits_{l=1}^{m_0} \Gamma(b_l + B_l -B_l s -B_l \xi) }{\prod \limits_{i=n_0+1}^{p_0} \Gamma(a_i + A_i - A_i s - A_i \xi) \prod \limits_{l=m_0 + 1}^{q_0} \Gamma(1-b_l - B_l + B_l s + B_l \xi)},\quad \Theta(\xi) = \frac{\Gamma(1+\xi)\Gamma(-\xi) \Gamma(-\xi)}{\Gamma(1-\xi)}.
\end{split}
\end{equation}
\hrulefill
\vspace{-0.4cm}
\end{figure*}

Next, replacing $\xi = -\eta$, $s = -\tau$, $I_1$ can be expressed as (\ref{I1_final}) in terms of the bivariate Fox's $H$-function. In particular, when $n_1$ = 0, $I_1$ is further simplified in terms of the extended generalized bivariate Fox's $H$-function. 
Following the same methodology, $I_2$ can be obtained. $I_3$ can be finally achieved from \cite[eq. (18)]{7089281}, 
\begin{equation}
\begin{split}
\hspace{-2ex}&I_3  = \frac{\kappa_E}{2\pi j} \int_{\mathcal{L}_1} \mathcal{M}\{ \ln (1 + \gamma_E),s \}  \lambda_E^{-1+s} \Theta_E^f(1-s) ds\\
\hspace{-2ex}& = \frac{\kappa_E}{\lambda_E} H_{q_1+2,p_1+2}^{n_1 + 1,m_1 + 2} \left[ { \frac{1}{ \lambda_E} \left| \hspace{-1ex} {\begin{array}{*{20}c}
    {(1,1),(1,1),(1-d_l-D_l,D_l)_{l=1:p_1}}   \\
   {(1,1),(1-c_i-C_i,C_i)_{i=1:q_1},(0,1)}  \\
\end{array}} \right.} \hspace{-1.5ex}\right].
\end{split}
\end{equation}
\section{Proof for Asymptotic ASC} \label{Appendix_ASC}
Specifically, at high $\bar{\gamma}_B$ regime, $I_1$ can be expanded at the pole, i.e., $\xi = 0$, since $\xi = 0$ is the second order pole, as such, by using the residue theorem, we have
\begin{equation}
\begin{split}
\rm{Res}\left[ \frac{\Theta(s,\xi) \Theta(\xi)}{\lambda_B^\xi} ,0\right] &= \lim_{\xi \rightarrow 0} \frac{d}{d\xi}  \frac{\xi^2 \Theta(s,\xi) \Gamma(\xi)^2 \Gamma(1 -\xi)}{\lambda_B^\xi\Gamma(1+\xi)} \\
\end{split}
\end{equation}
Using the fact that $\frac{d \Gamma(s)}{ds} = \Gamma(s)\Psi_0(s)$ and the general Leibniz rule, we have 
\begin{equation} 
\begin{split}
&\rm{Res}\left[ \frac{\Theta(s,\xi) \Theta(\xi)}{\lambda_B^\xi} ,0\right] = \Theta(s,0) \left[  \sum_{l=1}^{m_0} B_l \Psi_0(b_l + B_l + B_l s) \right. \\
& \left. -  \sum_{j=1}^{n_0} A_j \Psi_0(1 - a_j - A_j -A_j s) + \sum_{l=m_0 +1}^{q_0} B_l \Psi_0(1 - b_l - B_l - B_l s) \right. \\
& \left. -  \sum_{i=n_0 + 1}^{p_0} A_i \Psi_0(a_i + A_i + A_i s)   -\ln \lambda_B\right],
\end{split}
\end{equation}
and subsequently when $\dfrac{\lambda_E}{\lambda_B} \rightarrow \infty$, we evaluate the residue at $s$, where 
\begin{equation}
\hspace{-2ex} s = \max \left[0, \left(-\frac{b_l + B_l}{B_l} \right)_{l = 1,\cdots,m_0},  \left(\frac{c_i + C_i - 1}{c_i} \right)_{i = 1,\cdots,n_1}\right].
\end{equation}
Considering all poles are simple, we arrive at the derived asymptotic $I_1$.

Similarly, at high $\bar{\gamma}_B$ regime, $I_2$ can be obtained at the point $u = \min\left(\frac{b_l + B_l}{B_l}\right)_{l=1,\cdots,m_0}$, the proof is subsequently completed.
\section*{Acknowledgment}
This work has been supported by the ETS’ research chair of physical layer security in wireless networks. We thank the editor and anonymous reviewers for their time and constructive comments, which improve this paper. The authors would also like to thank Dr. Hongjiang Lei and Yousuf Abo Rahama for their valuable input.

\ifCLASSOPTIONcaptionsoff
  \newpage
\fi

\bibliographystyle{IEEEtran}
\bibliography{FoxHfadingChannel}

\begin{thebibliography}{10}
\providecommand{\url}[1]{#1}
\csname url@samestyle\endcsname
\providecommand{\newblock}{\relax}
\providecommand{\bibinfo}[2]{#2}
\providecommand{\BIBentrySTDinterwordspacing}{\spaceskip=0pt\relax}
\providecommand{\BIBentryALTinterwordstretchfactor}{4}
\providecommand{\BIBentryALTinterwordspacing}{\spaceskip=\fontdimen2\font plus
\BIBentryALTinterwordstretchfactor\fontdimen3\font minus
  \fontdimen4\font\relax}
\providecommand{\BIBforeignlanguage}[2]{{%
\expandafter\ifx\csname l@#1\endcsname\relax
\typeout{** WARNING: IEEEtran.bst: No hyphenation pattern has been}%
\typeout{** loaded for the language `#1'. Using the pattern for}%
\typeout{** the default language instead.}%
\else
\language=\csname l@#1\endcsname
\fi
#2}}
\providecommand{\BIBdecl}{\relax}
\BIBdecl

\bibitem{8358703}
H.~Lei, H.~Luo, K.~H. Park, Z.~Ren, G.~Pan, and M.~S. Alouini, ``Secrecy outage
  analysis of mixed {RF-FSO} systems with channel imperfection,'' \emph{IEEE
  Photon. J.}, vol.~10, no.~3, pp. 1--13, Jun. 2018.

\bibitem{7968415}
H.~Lei, Z.~Dai, I.~S. Ansari, K.~H. Park, G.~Pan, and M.~S. Alouini, ``On
  secrecy performance of mixed {RF-FSO} systems,'' \emph{IEEE Photon. J.},
  vol.~9, no.~4, pp. 1--14, Aug. 2017.

\bibitem{7886273}
S.~K. Yoo, S.~L. Cotton, P.~C. Sofotasios, M.~Matthaiou, M.~Valkama, and G.~K.
  Karagiannidis, ``The {F}isher-{S}nedecor $\mathcal{F}$ distribution: A simple
  and accurate composite fading model,'' \emph{IEEE Commun. Lett.}, vol.~21,
  no.~7, pp. 1661--1664, Jul. 2017.

\bibitem{8359199}
O.~S. Badarneh, D.~B. da~Costa, P.~C. Sofotasios, S.~Muhaidat, and S.~L.
  Cotton, ``On the sum of {Fisher-Snedecor} $\mathcal{F}$ variates and its
  application to maximal-ratio combining,'' \emph{IEEE Wireless Commun. Lett.},
  vol.~7, no.~6, pp. 966--969, Dec. 2018.

\bibitem{Shan49}
C.~Shannon, ``Communication theory of secrecy systems,'' \emph{Bell System
  Technical Journal}, vol.~28, no.~4, pp. 656--715, Oct. 1949.

\bibitem{6772207}
A.~D. Wyner, ``The wire-tap channel,'' \emph{The Bell System Technical
  Journal}, vol.~54, no.~8, pp. 1355--1387, Oct. 1975.

\bibitem{Leung78}
S.~Leung-Yan-Cheong and M.~Hellman, ``The {Gaussian} wire-tap channel,''
  \emph{IEEE Trans. Inf. Theory}, vol.~24, no.~4, pp. 451--456, Jul. 1978.

\bibitem{4529264}
M.~Bloch, J.~Barros, M.~R.~D. Rodrigues, and S.~W. McLaughlin, ``Wireless
  information-theoretic security,'' \emph{IEEE Trans. Inf. Theory}, vol.~54,
  no.~6, pp. 2515--2534, Jun. 2008.

\bibitem{6338984}
X.~Liu, ``Probability of strictly positive secrecy capacity of the
  {Rician-Rician} fading channel,'' \emph{IEEE Wireless Commun. Lett.}, vol.~2,
  no.~1, pp. 50--53, Feb. 2013.

\bibitem{IETletter2}
Y.~{Ai }, L.~{Kong}, and M.~{Cheffena}, ``Secrecy outage analysis of double
  shadowed rician channels,'' \emph{Electron. Lett.}, 2019.

\bibitem{6831147}
X.~Liu, ``Probability of strictly positive secrecy capacity of the {Weibull}
  fading channel,'' in \emph{2013 IEEE GLOBECOM}, Atlanta, GA, USA, Dec. 2013,
  pp. 659--664.

\bibitem{7058442}
G.~Pan, C.~Tang, X.~Zhang, T.~Li, Y.~Weng, and Y.~Chen, ``Physical-layer
  security over non-small-scale fading channels,'' \emph{IEEE Trans. Veh.
  Technol.}, vol.~65, no.~3, pp. 1326--1339, Mar. 2016.

\bibitem{7313027}
H.~Lei, C.~Gao, I.~S. Ansari, Y.~Guo, G.~Pan, and K.~A. Qaraqe, ``On
  physical-layer security over {SIMO} generalized-\textit{K} fading channels,''
  \emph{IEEE Trans. Veh. Technol.}, vol.~65, no.~9, pp. 7780--7785, Sept. 2016.

\bibitem{7342908}
H.~Lei, H.~Zhang, I.~S. Ansari, C.~Gao, Y.~Guo, G.~Pan, and K.~A. Qaraqe,
  ``Performance analysis of physical layer security over generalized-\textit{K}
  fading channels using a mixture gamma distribution,'' \emph{IEEE Commun.
  Lett.}, vol.~20, no.~2, pp. 408--411, Feb. 2016.

\bibitem{7654915}
H.~Lei, I.~S. Ansari, C.~Gao, Y.~Guo, G.~Pan, and K.~A. Qaraqe,
  ``Physical-layer security over generalised-\textit{K} fading channels,''
  \emph{IET Commun.}, vol.~10, no.~16, pp. 2233--2237, Nov. 2016.

\bibitem{8279403}
L.~{Wu}, L.~{Yang}, J.~{Chen}, and M.~{Alouini}, ``Physical layer security for
  cooperative relaying over generalized-\textit{K} fading channels,''
  \emph{IEEE Wireless Commun. Lett.}, vol.~7, no.~4, pp. 606--609, Aug. 2018.

\bibitem{8672771}
L.~{Kong} and G.~{Kaddoum}, ``Secrecy characteristics with assistance of
  mixture gamma distribution,'' \emph{IEEE Wireless Commun. Lett.}, pp. 1--1,
  2019.

\bibitem{7094262}
H.~Lei, C.~Gao, Y.~Guo, and G.~Pan, ``On physical layer security over
  generalized {Gamma} fading channels,'' \emph{IEEE Commun. Lett.}, vol.~19,
  no.~7, pp. 1257--1260, Jul. 2015.

\bibitem{7374839}
L.~Kong, H.~Tran, and G.~Kaddoum, ``Performance analysis of physical layer
  security over $\alpha$-$\mu$ fading channel,'' \emph{Electron. Lett.},
  vol.~52, no.~1, pp. 45--47, Jan. 2016.

\bibitem{7856980}
H.~Lei, I.~S. Ansari, G.~Pan, B.~Alomair, and M.~S. Alouini, ``Secrecy capacity
  analysis over $\alpha-\mu$ fading channels,'' \emph{IEEE Commun. Lett.},
  vol.~21, no.~6, pp. 1445--1448, Jun. 2017.

\bibitem{ICCWORKSHOPlong2018}
L.~Kong, G.~Kaddoum, and S.~Vuppala, ``On secrecy analysis for {D2D} networks
  over $\alpha-\mu$ fading channels with randomly distributed eavesdroppers,''
  in \emph{2018 IEEE ICC Workshop 5G-Security}, Kansas City, MO, USA, May 2018.

\bibitem{OutageLongLett}
L.~Kong, G.~Kaddoum, and Z.~Rezki, ``Highly accurate and asymptotic analysis on
  the {SOP} over {SIMO} $\alpha-\mu$ fading channels,'' \emph{IEEE Commun.
  Lett.}, vol.~22, no.~10, pp. 2088--2091, Oct. 2018.

\bibitem{8477185}
L.~{Kong}, S.~{Vuppala}, and G.~{Kaddoum}, ``Secrecy analysis of random {MIMO}
  wireless networks over $\alpha$-$\mu$ fading channels,'' \emph{IEEE Trans.
  Veh. Technol.}, vol.~67, no.~12, pp. 11\,654--11\,666, Dec 2018.

\bibitem{8421203}
A.~{Mathur}, Y.~{Ai}, M.~R. {Bhatnagar}, M.~{Cheffena}, and T.~{Ohtsuki}, ``On
  physical layer security of $\alpha$-$\eta$-$\kappa$-$\mu$ fading channels,''
  \emph{IEEE Commun. Lett.}, vol.~22, no.~10, pp. 2168--2171, Oct. 2018.

\bibitem{4067122}
M.~D. Yacoub, ``The $\alpha$-$\mu$ distribution: A physical fading model for
  the stacy distribution,'' \emph{IEEE Trans. Veh. Technol.}, vol.~56, no.~1,
  pp. 27--34, Jan. 2007.

\bibitem{4231253}
------, ``The $\kappa$-$\mu$ distribution and the $\eta$-$\mu$ distribution,''
  \emph{IEEE Antennas Propag. Mag.}, vol.~49, no.~1, pp. 68--81, Feb. 2007.

\bibitem{6226905}
F.~Yilmaz and M.~S. Alouini, ``A novel unified expression for the capacity and
  bit error probability of wireless communication systems over generalized
  fading channels,'' \emph{IEEE Trans. Commun.}, vol.~60, no.~7, pp.
  1862--1876, Jul. 2012.

\bibitem{8354927}
L.~Kong, G.~Kaddoum, and D.~B. da~Costa, ``Cascaded $\alpha-\mu$ fading
  channels: Reliability and security analysis,'' \emph{IEEE Access}, vol.~6,
  pp. 41\,978--41\,992.

\bibitem{7089281}
H.~R. Alhennawi, M.~M. H.~E. Ayadi, M.~H. Ismail, and H.~A.~M. Mourad,
  ``Closed-form exact and asymptotic expressions for the symbol error rate and
  capacity of the \textit{H}-function fading channel,'' \emph{IEEE Trans. Veh.
  Technol.}, vol.~65, no.~4, pp. 1957--1974, Apr. 2016.

\bibitem{7470056}
Y.~A. Rahama, M.~H. Ismail, and M.~S. Hassan, ``Capacity of {Fox}'s
  \textit{H}-function fading channel with adaptive transmission,''
  \emph{Electron. Lett.}, vol.~52, no.~11, pp. 976--978, May 2016.

\bibitem{7547496}
M.~M. H.~E. Ayadi, M.~H. Ismail, and H.~R. Alhennawi, ``Unified approach for
  probability of detection evaluation over generalised fading channels,''
  \emph{IET Commun.}, vol.~10, no.~12, pp. 1532--1541, Aug. 2016.

\bibitem{bodenschatz1992finding}
C.~D. Bodenschatz, ``Finding an {H}-function distribution for the sum of
  independent {H}-function variates,'' Ph.D. dissertation, 1992.

\bibitem{cook1981h}
I.~D. Cook~Jr, ``The \textit{H}-function and probability density functions of
  certain algebraic combinations of independent random variables with
  {H}-function probability distribution,'' Ph.D. dissertation, 1981.

\bibitem{8338118}
Y.~A. Rahama, M.~H. Ismail, and M.~Hassan, ``On the sum of independent {Fox's
  H-function} variates with applications,'' \emph{IEEE Trans. Veh. Technol.},
  vol.~67, no.~8, pp. 6752--6760, Aug. 2018.

\bibitem{6550886}
Y.~Jeong, J.~W. Chong, H.~Shin, and M.~Z. Win, ``Intervehicle communication:
  {Cox-Fox} modeling,'' \emph{IEEE J. Sel. Areas Commun.}, vol.~31, no.~9, pp.
  418--433, Sep. 2013.

\bibitem{LongFisherF}
L.~Kong and G.~Kaddoum, ``On physical layer security over the {Fisher-Snedecor}
  $\mathcal{F}$ wiretap fading channels,'' \emph{IEEE Access}, vol.~6, no.~1,
  pp. 39\,466--39\,472, 2018.

\bibitem{7467556}
N.~Bhargav, S.~L. Cotton, and D.~E. Simmons, ``Secrecy capacity analysis over
  $\kappa $-$\mu $ fading channels: Theory and applications,'' \emph{IEEE
  Trans. Commun.}, vol.~64, no.~7, pp. 3011--3024, Jul. 2016.

\bibitem{7996664}
S.~Iwata, T.~Ohtsuki, and P.~Y. Kam, ``Secure outage probability over
  $\kappa$-$\mu$ fading channels,'' in \emph{2017 IEEE Int. Conf. Commun.
  (ICC)}, Paris, France, May 2017, pp. 1--6.

\bibitem{8013132}
J.~M. Moualeu and W.~Hamouda, ``On the secrecy performance analysis of {SIMO}
  systems over $\kappa$-$\mu$ fading channels,'' \emph{IEEE Commun. Lett.},
  vol.~21, no.~11, pp. 2544--2547, Nov. 2017.

\bibitem{mathai2009h}
A.~M. Mathai, R.~K. Saxena, and H.~J. Haubold, \emph{The \textit{H}-function:
  theory and applications}.\hskip 1em plus 0.5em minus 0.4em\relax Springer
  Science \& Business Media, 2009.

\bibitem{gradshteyn2014table}
I.~S. Gradshteyn and I.~M. Ryzhik, \emph{Table of integrals, series, and
  products}.\hskip 1em plus 0.5em minus 0.4em\relax Academic press, 2014.

\bibitem{7881216}
L.~Kong, J.~He, G.~Kaddoum, S.~Vuppala, and L.~Wang, ``Secrecy analysis of a
  {MIMO} full-duplex active eavesdropper with channel estimation errors,'' in
  \emph{2016 IEEE 84th VTC-Fall}, Montreal, Quebec, Canada, Sept. 2016, pp.
  1--5.

\bibitem{debnath2014integral}
L.~Debnath and D.~Bhatta, \emph{Integral transforms and their
  applications}.\hskip 1em plus 0.5em minus 0.4em\relax CRC press, 2014,
  vol.~3.

\bibitem{prudnikov1990integrals}
A.~P. Prudnikov, Y.~A. Brychkov, and O.~I. Marichev, \emph{Integrals and
  Series: More special functions}.\hskip 1em plus 0.5em minus 0.4em\relax
  Gordon and Breach Science Publishers, 1990, vol.~3.

\bibitem{7370883}
H.~Chergui, M.~Benjillali, and S.~Saoudi, ``Performance analysis of
  project-and-forward relaying in mixed {MIMO}-pinhole and {Rayleigh} dual-hop
  channel,'' \emph{IEEE Commun. Lett.}, vol.~20, no.~3, pp. 610--613, Mar.
  2016.

\bibitem{8327926}
S.~{Cho}, G.~{Chen}, and J.~P. {Coon}, ``Physical layer security in visible
  light communication systems with randomly located colluding eavesdroppers,''
  \emph{IEEE Wireless Commun. Lett.}, vol.~7, no.~5, pp. 768--771, Oct. 2018.

\bibitem{752901}
N.~Kong and L.~B. Milstein, ``Average {SNR} of a generalized diversity
  selection combining scheme,'' \emph{IEEE Commun. Lett.}, vol.~3, no.~3, pp.
  57--59, Mar. 1999.

\bibitem{WAFO}
P.~Brodtkorb, P.~Johannesson, G.~Lindgren, I.~Rychlik, J.~Ryd{\'e}n, and
  E.~Sj{\"o}, ``{WAFO} - a {M}atlab toolbox for the analysis of random waves
  and loads,'' in \emph{Proc. 10'th Int. Offshore and Polar Eng. Conf., ISOPE,
  Seattle, USA}, vol.~3, 2000, pp. 343--350.

\bibitem{peppas2012simple}
K.~P. Peppas, F.~Lazarakis, A.~Alexandridis, and K.~Dangakis, ``Simple,
  accurate formula for the average bit error probability of multiple-input
  multiple-output free-space optical links over negative exponential turbulence
  channels,'' \emph{Optics lett.}, vol.~37, no.~15, pp. 3243--3245, 2012.

\bibitem{6141996}
K.~P. Peppas, ``A new formula for the average bit error probability of dual-hop
  amplify-and-forward relaying systems over generalized shadowed fading
  channels,'' \emph{IEEE Wireless Commun. Lett.}, vol.~1, no.~2, pp. 85--88,
  Apr. 2012.

\bibitem{Hatim2018_K_Factor}
\BIBentryALTinterwordspacing
H.~Chergui, M.~Benjillali, and M.-S. Alouini, ``Rician {K}-factor-based
  analysis of {XLOS} service probability in {5G} outdoor ultra-dense
  networks,,'' 2018. [Online]. Available:
  \url{https://arxiv.org/abs/1804.08101}
\BIBentrySTDinterwordspacing

\end{thebibliography}

\end{document}